\newtheorem{thm}{Theorem}
\newtheorem{lemma}[thm]{Lemma}
\newtheorem{remark}[thm]{Remark}
\newcommand{\R}{\mathbb{R}}
\newcommand{\C}{\mathbb{C}}
\renewcommand\phi{\varphi}
\renewcommand{\geq}{\geqslant}
\renewcommand{\leq}{\leqslant}
\renewcommand{\tilde}{\widetilde}
\newcommand{\be}{\begin{equation}}
\newcommand{\ee}{\end{equation}}
\newcommand{\bq}{\begin{equation}}
\newcommand{\eq}{\end{equation}}
\newcommand{\eps}{\varepsilon}
\newtheorem{definition}{Definition}
\title{Stationary solutions for the 2D critical Dirac equation with Kerr nonlinearity}
\author[W. Borrelli]{William Borrelli}
\address{Universit\'e Paris-Dauphine, PSL Research University, CNRS, UMR 7534, CEREMADE, F-75016 Paris, France} 
\ead{borrelli@ceremade.dauphine.fr}
\journal{Journal of Differential Equations}
\date{\today}
\DeclareRobustCommand{\SkipTocEntry}[5]{}
\begin{document}
%%%%%%%%%%%%%%%%%%%%%%%%%%%%%%%%%%%%%%%%%%%%%%%%%%%%%%%%%%%%%%%%%%%%%%%%%%%%%%
%%%%%%%%%%%%%%%%%%%%%%%%%%%%%%%%%%%%%%%%%%%%%%%%%%%%%%%%%%%%%%%%%%%%%%%%%%%%%%

\begin{abstract}
In this paper we prove the existence of an exponentially localized stationary solution for a two-dimensional cubic Dirac equation. It appears as an effective equation in the description of nonlinear waves for some Condensed Matter (Bose-Einstein condensates) and Nonlinear Optics  (optical fibers) systems. The nonlinearity is of Kerr-type, that is of the form $\vert \psi\vert^{2}\psi$ and thus not Lorenz-invariant. We solve compactness issues related to the critical Sobolev embedding $H^{\frac{1}{2}}(\R^{2},\C^{2})\hookrightarrow L^{4}(\R^{2},\C^{4})$ thanks to a particular radial ansatz. Our proof is then based on elementary dynamical systems arguments. 
\end{abstract}

\maketitle

\tableofcontents

% USA MEDSKIP 
\section*{Introduction}
The Dirac equation has been widely used to build relativistic models of particles(see the survey paper \cite{els}). 

Recently, it made its appearance in Condensed Matter Physics. New two-dimensional materials possessing Dirac fermions as low-energy excitations have been discovered, the most famous being the graphene \cite{graphene} (2010 Nobel Prize in Physics awarded to A.Geim and K. Novoselov). Those \textit{Dirac materials}, possess unique electronic properties which are consequence of the Dirac spectrum. They range from superfluid phases of $^{3}$He, high-temperature d-wave superconductors, graphene to topological insulators (see \cite{diracmaterials,diracfermions,introdiracmaterials} and references therein). Particular symmetries control the appearance of Dirac points. Time-reversal symmetry in topological insulators and sublattice symmetry in graphene \cite{introdiracmaterials} are some examples. In the paper \cite{FWhoneycomb} the authors rigorously proved existence and stability of Dirac cones for honeycomb Schr\"{o}dinger operators, under fairly general assumptions. 

The possibility of finding three-dimensional materials exhibiting a Dirac spectrum has also recently gained attention in the Physics community \cite{diracmaterials}.

In contrast to the case of many metals and doped semi-conductors, where nearly free quasi-particles obeying the Schr\"{o}dinger equation with an effective mass represent a very accurate approximation for low energy-excitations, for Dirac materials an accurate description is provided by the Dirac hamiltonian
$$ H=-iv_{F}(\vec{\sigma}\cdot\nabla)+mv_{F}^{2}\sigma_{3}$$
where the speed of light is replaced by the Fermi velocity $v_{F}$ and $m$ is an effective mass.

If $m=0$ the dispersion relation is linear (i.e. a cone), in contrast with the parabolic dispersion of metals or semiconductors. This case includes graphene monolayers \cite{introdiracmaterials}. 

The case of a non-vanishing mass term ($m\neq0$) corresponds to a gap at the Fermi level. It describes, for instance, a monolayer of boron-nitride or graphene bilayers (\cite{diracfermions}). It has been experimentally proved that placing boron-nitride in contact with graphene leads to the appearance of a non-zero mass, thus creating an energy gap.

Furthermore, using arguments from \cite{FTWprotected} and \cite{binding} a multiscale expansion shows that applying a suitable electric field (formally) opens a gap in the effective Dirac hamiltonian for the graphene, in the case of wavefunctions spectrally concentrated around a Dirac point. In the recent paper \cite{binding} the authors showed the existence of a gap for honeycomb Schr\"{o}dinger operators in the strong-binding regime, when an electric potential that breaks the $\mathcal{PT}$-symmetry (parity+time-inversion) is applied. 

%In this paper we focus on the massive case, and we plan to address the massless case in a forthcoming article.  We remark that the evolution problem for the massless case as been recently treated in the paper (REF), where the authors proved local well-posedness for weak nonlinearities.

An important model in nonlinear optics and in the description of macroscopic quantum phenomena (see \cite{boseeinstein},\cite{nonlinearoptics}) is the \textit{cubic Schr\"{o}dinger / Gross-Pitaevskii equation}:

\begin{equation}\label{eq:GP}
i\partial_{t}\Psi=\left(-\Delta+V\right)\Psi+g\vert\Psi\vert^{2}\Psi
\end{equation}
where $g$ is a parameter that measures the scattering length and the cubic term is a mean field interaction or a Kerr-nonlinear term due to a variable refractive index, according to the model.

The above equation appears, for instance, in the description of Bose-Einstein condensates.

If $V$ is a honeycomb potential, the low-energy effective operator around a Dirac point is the Dirac operator (see \cite{introdiracmaterials}) :

\begin{equation}
\left(-\Delta+V\right) \longmapsto\mathcal{D}:=-ic\left(\vec{\sigma}\cdot\nabla\right)
\end{equation}
Note that it acts on two-components spinors 
$$ \psi=\begin{pmatrix}\psi^{1} \\ \psi^{2} \end{pmatrix}\in\mathbb{C}^{2}$$
since the honeycomb lattice is a superposition of two triangular Bravais lattices. In this case the spinor encodes the isospin of the sublattices, rather than the proper spin of the electron (see \cite{introdiracmaterials}). 

As remarked above, applying a suitable electric potential or placing the material on a substrate results in an additional mass term. Thus the effective equation reads as

\begin{equation}\label{eq:timeDirac}
i\partial_{t}\Psi=\left(\mathcal{D}+m\sigma_{3}\right)\Psi+g\vert\Psi\vert^{2}\Psi
\end{equation} 
Our aim is to prove the existence of stationary solutions to (\ref{eq:timeDirac}) in the \textit{focusing} case, $g=-1$. Setting $$\Psi(x,t)=e^{-i\omega t}\psi(x)$$ with $0<\omega<m$, the equation rewrites as

\begin{equation}\label{equation}
\left(\mathcal{D}+m\sigma_{3}-\omega\right)\psi-\vert\psi\vert^{2}\psi=0
\end{equation}

The main result of this paper is the following
\begin{thm}\label{main}
Equation (\ref{equation}) admits a smooth localized solution, with exponential decay at infinity. 
\end{thm}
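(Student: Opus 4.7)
The plan is to exploit rotational invariance in (\ref{equation}) to reduce the PDE to a planar ODE system, and then run a shooting argument, in the spirit of the Esteban--S\'er\'e treatment of the three-dimensional Dirac--Soler model adapted to two space dimensions. Concretely, I would look for solutions of the form
\begin{equation*}
\psi(r,\theta) = \begin{pmatrix} u(r) \\ i\, v(r)\, e^{i\theta} \end{pmatrix},\qquad u,v:[0,\infty)\to\R,
\end{equation*}
for which $|\psi|^{2}=u^{2}+v^{2}$ depends only on $r$. Writing $\vec{\sigma}\cdot\nabla$ in polar coordinates, equation (\ref{equation}) reduces to a $2\times 2$ nonautonomous system in $r\in(0,\infty)$ whose only $r$-dependent coefficient is a $1/r$ term coming from the angular part of the Dirac operator. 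Regularity at the origin forces $v(0)=0$, leaving a one-parameter family of admissible Cauchy data indexed by $\lambda := u(0)>0$.

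Next I would analyse the phase portrait of the reduced system. Since $0<\omega<m$, the rest point $(0,0)$ is hyperbolic for the autonomous limit at $r=+\infty$, with one-dimensional stable manifold whose contraction rate is proportional to $\sqrt{m^{2}-\omega^{2}}$. The aim is to produce $\lambda_{*}>0$ for which the trajectory $(u_{\lambda_{*}},v_{\lambda_{*}})$ emanating from $(\lambda_{*},0)$ at $r=0$ lies on that stable manifold. I would partition $\{\lambda>0\}$ into two open sets: one, $A$, of initial data for which the trajectory ``overshoots'' and leaves a well-chosen invariant region (say, by crossing into a quadrant on which the linearisation is expanding), and a second, $B$, of data for which it ``undershoots'' and becomes trapped away from the origin. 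Non-emptiness of $B$ should come from comparison with the linear Dirac equation near $r=0$ for small $\lambda$, while non-emptiness of $A$ should follow from a nonlinearity-dominated argument for large $\lambda$. Continuity then forces any $\lambda_{*}\in\partial A\cap\partial B$ to produce the desired heteroclinic, and the hyperbolic structure at infinity yields the advertised exponential decay. Smoothness follows from ODE regularity together with an elliptic bootstrap applied to the original spinor equation.

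The main obstacle will be the shooting step itself, because the $1/r$ forcing destroys any naive Hamiltonian structure that would otherwise furnish a single conserved quantity for a phase-plane analysis. Consequently the small-vs.-large-$\lambda$ dichotomy has to be established piecewise: near $r=0$ by comparison with the linear equation, near $r=\infty$ by comparison with the autonomous cubic system, and in between by controlling a carefully chosen Lyapunov functional whose derivative along trajectories absorbs the $1/r$ perturbation. One must also exclude pathological limiting behaviour of the threshold trajectory — vanishing of $(u,v)$ at a finite radius, or infinite spiralling around $(0,0)$ — in order to conclude that $(u_{\lambda_{*}},v_{\lambda_{*}})$ genuinely produces a bound state of (\ref{equation}) with the claimed properties.
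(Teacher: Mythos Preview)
Your overall architecture---radial ansatz reducing to a planar nonautonomous system, a one-parameter shooting in $\lambda$, and a topological separation argument---matches the paper's. The exponential decay via the hyperbolic rest point at infinity is also correct. But two substantive points separate your sketch from a proof, and the second is a genuine gap.

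First, you write that the $1/r$ term ``destroys any naive Hamiltonian structure'' and that one must therefore build a Lyapunov functional by hand. In fact the autonomous Hamiltonian
\[
H(u,v)=\tfrac14(u^{2}+v^{2})^{2}+\tfrac{m}{2}(u^{2}-v^{2})+\tfrac{\omega}{2}(u^{2}+v^{2})
\]
is \emph{monotone decreasing} along trajectories of the full nonautonomous system: the $1/r$ term contributes a nonpositive term to $\dot H$. This single observation replaces your ``piecewise'' control and organises the entire phase portrait. Solutions either stay in $\{H>0\}$ forever and tend to the origin, or enter $\{H<0\}$ and spiral toward $(0,\pm\sqrt{m-\omega})$. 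The ``undershoot'' set is then immediate: for $0<\lambda\le\sqrt{2(m-\omega)}$ the initial point $(0,\lambda)$ already sits in $\{H\le 0\}$. No linear comparison near $r=0$ is needed.

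Second---and this is the real difficulty your proposal glosses over---the claim that large $\lambda$ produces a sign change (``overshoot'') by a ``nonlinearity-dominated argument'' is where the criticality of the cubic term bites. Rescaling $U_\eps(r)=\eps\,u_\lambda(\eps^{2}r)$, $V_\eps(r)=\eps\,v_\lambda(\eps^{2}r)$ with $\eps=\lambda^{-1}$ sends the system to a limiting massless problem whose solution is the explicit conformal spinor $(U_0,V_0)=\bigl(2r/(4+r^2),\,4/(4+r^2)\bigr)$. The trouble is that $V_0>0$ for all $r$, so the leading-order profile \emph{never} changes sign: the naive large-$\lambda$ argument fails. The paper has to expand to the next order, $V_\eps=V_0+\eps^2 k_1+\eps^4 k_2$, show that $k_1(r)\sim -\ln r$ as $r\to\infty$ (this uses the exact form of $(U_0,V_0)$), and then control the remainder $k_2$ on the long interval $(0,\eps^{-1})$ by a delicate bootstrap-Gronwall argument to conclude that $V_\eps$ becomes negative near $r\sim\eps^{-1}$. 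This second-order asymptotic analysis is the core of the proof and is precisely where the critical Sobolev exponent manifests itself at the ODE level; your proposal does not anticipate it.
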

%We will prove the existence and regularity of stationary solutions to the above equation using variational techniques.
\begin{remark}
The result presented here is at odds with the case of the pseudo-relativistic operator $$\sqrt{-\Delta+m^{2}}\geq 0$$
Indeed, a simple Pohozaev-type argument shows that there is no smooth exponentially localized solution to the following equation
\begin{equation}
\left(\sqrt{-\Delta+m^{2}}\right)\psi-\omega\psi=\vert\psi\vert^{2}\psi\qquad\mbox{on}\quad\mathbb{R}^{2}
\end{equation}
with $0< \omega <m$. 

Thus the existence of solutions is related to the presence of the negative part of the spectrum of the Dirac operator (see next section).
\end{remark}

\begin{remark}
In this case the zero-energy corresponds to the Fermi level. Then there is no interpretation of the Dirac spectrum in terms of particles/antiparticles. Rather, the positive part of the spectrum corresponds to massive conduction electrons, while the negative one to valence electrons.
\end{remark}
\noindent\textbf{Acknowledgment.} 
The author wishes to thank \'{E}ric S\'{e}r\'{e} for his support.

%%%%%%%%%%%%%%%%%%%%%%%%%%%%%%%%%%%%%%%%%%%%%%%%%%%%%%%%%%%%%%%%%%%%%%%%%%%%%%
\section{Preliminaries}
%%%%%%%%%%%%%%%%%%%%%%%%%%%%%%%%%%%%%%%%%%%%%%%%%%%%%%%%%%%%%%%%%%%%%%%%%%%%%%
The Dirac operator is a first order differential operator formally defined in 2D (in the standard representation) as

\begin{equation}
\mathcal{D}_{m}=\mathcal{D}+mc^{2}\sigma_{3}:=-ic\hbar(\vec{\sigma}\cdot\nabla)+mc^{2}\sigma_{3}
\end{equation}
where $c$ denotes the speed of light, $m$ is the electron mass, $\hbar$ is the reduced Planck constant, $\vec{\sigma}\cdot\nabla:=\sigma_{1}\partial_{1}+\sigma_{2}\partial_{2}$ and the $\sigma_{k}$ are the Pauli matrices
\begin{equation}\label{eq:pauli} \sigma_{1}:=\begin{pmatrix} 0 \quad& 1 \\ 1 \quad& 0 \end{pmatrix}\quad,\quad \sigma_{2}:=\begin{pmatrix} 0 \quad& -i \\ i \quad& 0 \end{pmatrix} \quad,\quad \sigma_{3}:=\begin{pmatrix} 1 \quad& 0 \\ 0 \quad& -1 \end{pmatrix}\end{equation}
In this paper we shall work with a system of physical units such that $c=1$ and $\hbar=1$.

It is well known (see \cite{diracthaller}) that $\mathcal{D}_{m}$ is a self-adjoint operator on $L^{2}(\mathbb{R}^{2},\mathbb{C}^{2})$, with domain $H^{1}(\mathbb{R}^{2},\mathbb{C}^{2})$ and form-domain $H^{\frac{1}{2}}(\mathbb{R}^{2},\mathbb{C}^{2})$. 

Moreover, in Fourier domain $p=(p_{1},p_{2})$ the Dirac operator becomes the multiplication operator by the matrix 
$$ \widehat{\mathcal{D}}_{m}(p)=\begin{pmatrix}m \quad & p_{1}-ip_{2}\\ p_{1}+ip_{2}\quad &m \end{pmatrix}$$ %+\begin{pmatrix}m\quad &0 \\ 0 \quad &-m \end{pmatrix}$$
so the spectrum is given by

\begin{equation}\label{eq:spectrum}
Spec(\mathcal{D}_{m})=(-\infty, -m]\cup[m, +\infty)
\end{equation}
where the gap is due to the mass term.

In this paper we focus on the following equation
\begin{equation}\label{eq:cubic}
\left(\mathcal{D}+m\sigma_{3}-\omega\right)\psi-\vert\psi\vert^{2}\psi=0 \qquad\mbox{on}\quad\mathbb{R}^{2},\quad\mbox{with}\quad 0<\omega<m
\end{equation}
whose weak solutions correspond to critical points of the following functional
\begin{equation}\label{eq:func}
\mathcal{L}(\psi):=\frac{1}{2}\int\langle\left(\mathcal{D}+m\sigma_{3}-\omega\right)\psi,\psi\rangle-\frac{1}{4}\int\vert\psi\vert^{4}
\end{equation}
defined for $\psi\in H^{\frac{1}{2}}(\mathbb{R}^{2},\mathbb{C}^{2})$.

The above functional is strongly indefinite, that is, it is unbounded both from above and below, even modulo finite dimensional subspaces. This is due to the unboundedness of $Spec(\mathcal{D})$. Several techniques have been introduced to deal with such situations (see for instance \cite{struwevariational}). 

%In this paper we prove the (THEOREM) using non-variational techniques. However, a variational characterization would provide an useful piece of information.  

Moreover, the main difficulty in our case is given by the lack of compactness of the Sobolev embedding $H^{\frac{1}{2}}(\mathbb{R}^{2},\mathbb{C}^{2})\hookrightarrow L^{4}(\mathbb{R}^{2},\mathbb{C}^{2})$. This implies the failure of some compactness properties used to prove linking results (see \cite{struwevariational} and references therein), due to the invariance by translations and scaling. 

In what follows we will only give a sketch of the compactness analysis for the above functional, referring to the mentioned papers for more details.

As we will see in the next section, equation (\ref{eq:cubic}) is compatible with a particular ansatz, leading us to work in the closed subspace
\be\label{subspace}
E=\left\{\psi\in H^{\frac{1}{2}}(\R^{2},\C^{2}):\psi(r,\vartheta)=\begin{pmatrix} v(r) \\ iu(r)e^{i\vartheta} \end{pmatrix}, u,v:(0,+\infty)\rightarrow \R \right\}
\ee
where $(r,\vartheta)$ are the polar coordinates of $x\in\R^{2}$.

Restricting the problem to the subspace $E$ breaks the invariance by translations, and thus to recover compactness one has to deal with the invariance by scaling only. The latter causes the so-called \textit{bubbling phenomenon}, that is, energy concentration associated to the appearance of blow-up profiles. In \cite{Isobecritical} Isobe analyzed the behavior of a generic Palais-Smale sequence for the critical Dirac equation on compact spin manifolds. The same can be done in our case.

Given a Palais-Smale sequence $\left(\psi_{n}\right)\subseteq H^{\frac{1}{2}}$ it easy to see that it is bounded, and thus we may suppose, up to extraction, that it weakly converges $$\psi_{n}\rightharpoonup\psi_{\infty}\in H^{\frac{1}{2}}.$$

Generally speaking, the invariance by scaling prevents the strong convergence and we have the profile decomposition
\be\label{profile}
 \psi_{n}=\psi_{\infty}+\sum^{N}_{k=1}\omega^{k}_{n}+o(1)\qquad\mbox{in}\quad H^{\frac{1}{2}}(\R^{2},\C^{2})
 \ee
where $N\in\mathbb{N}$ and $\omega^{k}_{n}$ is a properly rescaled $\mathring{H}^{\frac{1}{2}}(\R^{2},\C^{2})$-solution of the limit equation
$$ \mathcal{D}\varphi = \vert\varphi\vert^{2}\varphi$$

centered around points $a^{k}_{n}\rightarrow a^{k}\in\R^{2}$, as $n\rightarrow+\infty$, for $1\leq k\leq N$.

The \textit{bubbles} $\omega^{k}_{n}$ are in a finite number, since one can prove a uniform lower bound for their energy. Moreover, this implies that we have compactness only in a suitable energy range and gives a treshold value for the appearance of bubbles in min-max methods (see \cite{struwevariational}).

Then in terms of $L^{4}$-norms, there holds
 \be\label{concentration}
 \vert\psi_{n}\vert^{4} dx\rightharpoonup\vert\psi_{\infty}\vert^{4}dx+\sum^{N}_{k=1}\nu_{k}\delta_{a^{k}}
 \ee
weakly in the sense of measures. Here $\nu_{k}\geq0$ and the $\delta_{a^{k}}$ are delta measures concentrated at $a^{k}$.

Morever, since we are essentially working with radial functions, it's not hard to see that the blow-up can only occur at the origin, that is, we actually have
\be\label{concentrationorigin}
\vert\psi_{n}\vert^{4} dx\rightharpoonup\vert\psi_{\infty}\vert^{4}dx+\nu\delta_{0}
\ee
with $\nu\geq0$ and $\delta_{0}$ being the delta concentrated at the origin.

We thus conclude that in order to recover compactness for the variational problem one should be able to control the behavior of Palais-Smale sequences near the origin. 

However, our proof is based on a shooting method and thus not variational. In this case the concentration phenomenon (\ref{concentrationorigin}) manifests itself in the difficulty of controlling the behavior of solutions of the resulting dynamical system when initial data are large. This makes the analysis quite delicate and requires a careful asymptotic expansion of the solution, after a suitable rescaling (see section \ref{asymptotic}). 

We mention that the first rigorous existence result of stationary solutions for the Dirac equation via shooting methods is due to Cazenave and Vazquez \cite{cv}, who studied the Soler model for elementary fermions. Subsequently, those methods have been used to prove the existence of excited states \cite{exciteddirac} for the Soler model and in mean field theories for nucleons (see e.g. \cite{estebannodari},\cite{letreustnodari}, \cite{els} and references therein). We remark that a variational proof has been given by Esteban and S\'{e}r\'{e} in \cite{es}, under fairly general assumptions on the self-interaction. In particular, after a suitable radial ansatz, they prove a multiplicity result exploiting the Lorentz-invariance. Remarkably, their method works without any growth assumption on the nonlinearity.  However, the proof is designed to deal with the Lorentz-invariant form of the nonlinear term and is not applicable in our case. In \cite{dingwei} Ding and Wei proved an existence result for the 3D Dirac equation with a subcritical Kerr-type interaction. The case of a critical nonlinearity in 3D has been investigated by Ding and Ruf \cite{dingruf} in the semiclassical regime, using variational techniques. They take advantage of the presence of a negative potential to prove compactness properties. However, in this paper we deal with a \textit{critical} Kerr nonlinearity without additional assumptions and so we need to adopt a different strategy. 

%%%%%%%%%%%%%%%%%%%%%%%%%%%%%%%%%%%%%%%%%%%%%%%%%%%%%%%%%%%%%%%%%%%%%%%%%%%%%
\section{Existence by shooting method}\label{shooting}
%%%%%%%%%%%%%%%%%%%%%%%%%%%%%%%%%%%%%%%%%%%%%%%%%%%%%%%%%%%%%%%%%%%%%%%%%%%%%%
To begin with, we first convert the equation into a dynamical system thanks to a particular ansatz. Then we will give some qualitative properties of the flow, particularly useful in understanding the long-time behavior of the system.

Passing to polar coordinates in $\R^{2}$ $(x,y)\mapsto ( r,\vartheta)$, the equation $$\left(\mathcal{D}+m\sigma_{3}-\omega\right)\psi-\vert\psi\vert^{2}\psi=0$$ reads as 

\begin{equation}
\left\{\begin{aligned}
 -e^{-i\vartheta}\left(i\partial_{r}+\frac{\partial_{\vartheta}}{r}\right)\psi_{2} &=\left(\vert\psi_{1}\vert^{2}+\vert\psi_{2}\vert^{2}\right)\psi_{1}-(m-\omega)\psi_{1} ,
\\
-e^{i\vartheta}\left(i\partial_{r}-\frac{\partial_{\vartheta}}{r}\right)\psi_{1} &=-\left(\vert\psi_{1}\vert^{2}+\vert\psi_{2}\vert^{2}\right)\psi_{2}-(m+\omega)\psi_{2}.
\end{aligned}\right.
\end{equation}
where $\psi=\begin{pmatrix} \psi_{1} \\ \psi_{2} \end{pmatrix}\in\mathbb{C}^{2} $, and this suggests the following ansatz (see \cite{cuevas}): 

\begin{equation}\label{ansatz}
\psi(r,\vartheta)=\begin{pmatrix} v(r)e^{iS\vartheta} \\ iu(r)e^{i(S+1)\vartheta} \end{pmatrix}
\end{equation}
with $u$ and $v$ real-valued and $S\in\mathbb{Z}$. In the sequel, we set $S=0$.

Plugging the above ansatz into the equation one gets 

\begin{equation}\label{radial}
\left\{\begin{aligned}
    \dot{u}+\frac{u}{r} &=(u^{2}+v^{2})v-(m-\omega)v \\ 
   \dot{v}&=-(u^{2}+v^{2})u-(m+\omega)u
\end{aligned}\right.
\end{equation}
Thus we are lead to study the flow of the above system. 

In particular, since we are looking for localized states, we are interested in solutions to (\ref{radial}) such that $$(u(r),v(r))\rightarrow(0,0)\qquad \mbox{as} \qquad r\rightarrow +\infty$$

In order to avoid singularities and to get non-trivial solutions, we choose as initial conditions $$u(0)=0\quad,\quad v(0)=\lambda\neq 0$$
Moreover, the symmetry of the system allows us to consider only the case $\lambda>0$. 

Studying the long-time behavior of the flow of (\ref{radial}) it is useful to introduce the following system
\begin{equation}
\label{hamiltonian}
\left\{\begin{aligned}
    \dot{u} &= (u^{2}+v^{2})v-(m-\omega)v \\ 
   \dot{v}&=-(u^{2}+v^{2})u-(m+\omega)u
\end{aligned}\right.
\end{equation}
Heuristically, (\ref{radial}) should reduce to (\ref{hamiltonian}) in the limit $r\rightarrow+\infty$ ($u$ being bounded), that is, dropping the singular term in the first equation. 

As one can easily check, (\ref{hamiltonian}) is the hamiltonian system associated with the function
\begin{equation}
\label{H}
H(u,v)=\frac{(u^{2}+v^{2})^{2}}{4}+\frac{m}{2}(u^{2}-v^{2})+\frac{\omega}{2}(u^{2}+v^{2})
\end{equation}
It's easy to see that the level sets of the hamiltonian $$\left\{H(u,v)=c\right\} $$ are compact, for all $c\in\mathbb{R}$, so that the flow is globally defined. 

The equilibria of the hamiltonian flow are the points
\be
(0,0), (0,\pm\sqrt{m-\omega})
\ee
and there holds
\be\label{equilibri}
H(0,0)=0,\quad H(0,\pm\sqrt{m-\omega})<0
\ee

Local existence and uniqueness of solutions of (\ref{radial}) are guaranteed by the following
\begin{lemma}
\label{existence}
Let $\lambda>0$. There exist $0<R_{\lambda}\leq+\infty$ and $(u,v)\in C^{1}([0,R_{\lambda}),\mathbb{R}^{2})$ unique maximal solution to (\ref{radial}), which depends continuously on $\lambda$ and uniformly on $[0,R]$ for any $0<R<R_{\lambda}$.
\end{lemma}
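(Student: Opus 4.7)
The only essential obstacle to invoking standard Cauchy--Lipschitz theory here is the singular coefficient $1/r$ appearing in the first equation of (\ref{radial}) at $r=0$. The plan is to recast (\ref{radial}) as a Volterra-type fixed-point equation near the origin, and then extend the local solution by classical ODE arguments away from $r=0$.

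First I would set $F_1(u,v) := (u^2+v^2)v - (m-\omega)v$ and $F_2(u,v) := -(u^2+v^2)u - (m+\omega)u$, multiply the first equation of (\ref{radial}) by $r$ to get $\frac{d}{dr}(ru) = r\, F_1(u,v)$, and integrate from $0$ using $u(0)=0$, $v(0)=\lambda$. The system then becomes equivalent to
\begin{equation*}
u(r) = \frac{1}{r}\int_0^r s\, F_1(u(s),v(s))\,ds, \qquad v(r) = \lambda + \int_0^r F_2(u(s),v(s))\,ds.
\end{equation*}
The decisive feature is the extra factor $s$ in the first integrand: if $(u,v)$ is bounded then $|u(r)| \leq \tfrac{r}{2}\|F_1\|_\infty$, so the singularity is tamed. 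On the Banach space $X := C^0([0,R],\R^2)$ endowed with the sup norm, I would define $T(u,v)$ by the two formulas above, setting $T(u,v)(0) := (0,\lambda)$. For $R$ small enough and a sufficiently small closed ball $B \subset X$ centered at the constant function $(0,\lambda)$, the map $T$ sends $B$ into itself and is a contraction, thanks to the local Lipschitz character of $F_1,F_2$ together with the $r/2$ prefactor in the first component. Banach's fixed-point theorem then yields a unique continuous fixed point on $[0,R]$.

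This fixed point is automatically of class $C^1$ on $[0,R]$: a Taylor expansion of $F_1$ around $(0,\lambda)$ applied to the integral formula gives $u(r) = \tfrac{1}{2}F_1(0,\lambda)\,r + O(r^2)$, so $u$ is differentiable at $0$ with $\dot u(0) = \tfrac{1}{2}F_1(0,\lambda)$, and the relation $\dot u(r) = F_1(u,v) - u/r$ for $r>0$ shows that $\dot u$ is continuous up to $0$. Past the singularity, the vector field in (\ref{radial}) is smooth on $[R,+\infty) \times \R^2$, so classical Cauchy--Lipschitz produces a unique maximal extension $(u,v) \in C^1([0,R_\lambda),\R^2)$ satisfying the usual blow-up alternative. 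Continuous dependence on $\lambda$, uniform on every $[0,R']$ with $R' < R_\lambda$, follows from a Gronwall estimate applied to the integral formulation on $[0,R]$, combined with the classical continuous dependence theorem on $[R,R']$. The main --- and essentially the only --- delicate point is the $1/r$ singularity at $r=0$, which the integral reformulation resolves; everything else reduces to routine ODE arguments.
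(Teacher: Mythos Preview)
Your proposal is correct and follows essentially the same approach as the paper: the paper also rewrites (\ref{radial}) in exactly this integral form (with the factor $s$ in the first integrand removing the $1/r$ singularity) and then invokes a contraction mapping argument, citing \cite{cv}. Your write-up simply fills in the details---the $C^1$ regularity at $r=0$, the extension past the singularity via standard Cauchy--Lipschitz, and the Gronwall-based continuous dependence---that the paper leaves implicit.
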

\begin{proof}
We can rewrite the system in integral form as
\begin{equation}\label{integral}
\left\{\begin{aligned}
   u(r) &= \frac{1}{r}\int^{r} _{0}sv(s)[u^{2}(s)+v^{2}(s)-(m-\omega)]ds \\ 
   v(r)&= \lambda-\int^{r}_{0}u(s)[(u^{2}(s)+v^{2}(s))+(m+\omega)]ds
\end{aligned}\right.
\end{equation}
where the r.h.s. is a Lipschitz continuous function. Then the claim follows by a contraction mapping argument, as in \cite{cv}.
\end{proof}

Given $\lambda>0$, define 
\begin{equation} 
H_{\lambda}(r):= H(u_{\lambda}(r),v_{\lambda}(r)) \quad,\quad r\in[0,R_{\lambda})
\end{equation}
where $(u_{\lambda},v_{\lambda})$ is the  solution of (\ref{radial}) such that $(u(0),v(0))=(0,\lambda)$. 

A simple computation gives
\begin{equation}
\label{decreasing}
 \dot{H}_{\lambda}(r)=-\frac{u^{2}_{\lambda}}{r}(m+\omega+u^{2}_{\lambda}(r)+v^{2}_{\lambda}(r))\leq0 \quad,\quad\forall r\in[0,R_{\lambda})
 \end{equation}
 
so that the energy $H$ is non-increasing along the solutions of (\ref{radial}). 

This implies that $\forall r\in[0,R_{x})$, $(u_{\lambda}(r),v_{\lambda}(r))\in \{H(u,v)\leq H(0,\lambda)\}$, the latter being a compact set. Thus there holds

\begin{lemma}
Every solution to (\ref{radial}) is global.
\end{lemma}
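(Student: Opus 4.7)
The plan is to argue by contradiction using the standard ODE continuation principle, feeding it the a priori bound supplied by the energy monotonicity identity (\ref{decreasing}).

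Fix $\lambda > 0$ and let $(u_\lambda, v_\lambda)$ be the maximal solution on $[0, R_\lambda)$ provided by Lemma \ref{existence}. The first step is to use (\ref{decreasing}) to conclude that
\[
H(u_\lambda(r), v_\lambda(r)) \;=\; H_\lambda(r) \;\leq\; H_\lambda(0) \;=\; H(0, \lambda) \qquad \forall\, r \in [0, R_\lambda).
\]
Since the sublevel set $K := \{(u,v)\in\mathbb{R}^2 : H(u,v) \leq H(0,\lambda)\}$ is compact (as already observed after (\ref{H})), the trajectory $(u_\lambda(r), v_\lambda(r))$ stays inside the fixed compact set $K$ throughout its interval of existence. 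In particular $u_\lambda$ and $v_\lambda$ are uniformly bounded on $[0, R_\lambda)$.

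Next, suppose for contradiction that $R_\lambda < +\infty$. Pick any $r_0 \in (0, R_\lambda)$. On the interval $[r_0, R_\lambda)$ the right-hand side of (\ref{radial}) is smooth in $(r, u, v)$ (the coefficient $1/r$ is bounded by $1/r_0$), and by the previous step it is uniformly bounded on the trajectory. Hence $\dot u_\lambda$ and $\dot v_\lambda$ are uniformly bounded on $[r_0, R_\lambda)$, so $(u_\lambda, v_\lambda)$ is Lipschitz there and admits a limit $(u_*, v_*) \in K$ as $r \to R_\lambda^-$ by the Cauchy criterion. Applying the local Cauchy–Lipschitz theorem at the (non-singular) point $r = R_\lambda$ with initial data $(u_*, v_*)$ produces a $C^1$ extension of $(u_\lambda, v_\lambda)$ strictly past $R_\lambda$, contradicting the maximality of $R_\lambda$. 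Therefore $R_\lambda = +\infty$, which is the claim.

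The only potentially delicate point is making sure that no finite-time blow-up can hide in the singular coefficient $1/r$ at the origin; but that singularity lies at $r = 0$, which is already inside the interval of existence granted by Lemma \ref{existence}, while the continuation argument above only needs to be run on an interval $[r_0, R_\lambda)$ with $r_0 > 0$, where the vector field is smooth. Thus the energy estimate (\ref{decreasing}) alone, combined with compactness of the level sets of $H$, is enough to conclude.
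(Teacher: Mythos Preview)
Your argument is correct and is essentially the same as the paper's: the energy monotonicity \eqref{decreasing} confines the trajectory to the compact sublevel set $\{H\le H(0,\lambda)\}$, which via the standard continuation principle forces $R_\lambda=+\infty$. The paper merely states this in one line, whereas you have spelled out the continuation step (limit at $R_\lambda^-$, local re-solving, handling of the $1/r$ singularity) in full detail.
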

\begin{remark}
The above result is in contrast with the case of Lorentz-invariant models in 3D (\cite{exciteddirac}), where the energy has no definite sign and blow-up may occur.  
\end{remark}
The following lemma indeed shows that the solutions to (\ref{radial}) are close to the hamiltonian flow (\ref{hamiltonian}) as $r\rightarrow +\infty$. The proof is the same as the one given in \cite{cv}.
\begin{lemma}
\label{stability}
Let $(f,g)$ be the solution of (\ref{hamiltonian}) with initial data $(f_{0},g_{0})$. Let $(u^{0}_{n},v{0}_{n})$ and $\rho_{n}$ be such that 
$$\rho_{n}\xrightarrow{n\rightarrow+\infty}+\infty\qquad\mbox{and}\qquad(u_{n},v_{n})\xrightarrow{n\rightarrow+\infty} (f_{0},g_{0}) $$
Consider the solution of 
$$\left\{\begin{aligned}
    \dot{u}_{n}+\frac{u_{n}}{r+\rho_{n}} &= (u^{2}_{n}+v^{2}_{n})v_{n}-(m-\omega)v_{n}  \\ 
   \dot{v}_{n}&= -(u^{2}_{n}+v^{2}_{n})u_{n}-(m+\omega)u_{n}
\end{aligned}\right. $$
such that $u_{n}(0)=u^{0}_{n}$ and $v_{n}(0)=v^{0}_{n}$.

Then $(u_{n},v_{n})$ converges to $(f,g)$ uniformly on bounded intervals.
\end{lemma}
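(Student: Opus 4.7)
The plan is to follow the classical continuous-dependence strategy: establish a uniform a priori bound for $(u_n,v_n)$ on $[0,R]$, extract a uniform limit via Arzelà-Ascoli, pass to the limit in the integral formulation, and identify the limit as $(f,g)$ by uniqueness.

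First I would obtain uniform bounds by adapting the energy estimate \eqref{decreasing} to the shifted system. Setting $H_n(r) := H(u_n(r),v_n(r))$, the same computation yields
\begin{equation*}
\dot{H}_n(r) = -\frac{u_n^2(r)}{r+\rho_n}\bigl(m+\omega+u_n^2(r)+v_n^2(r)\bigr) \leq 0,
\end{equation*}
so $H_n(r)\leq H(u_n^0,v_n^0)$. Since $(u_n^0,v_n^0)\to(f_0,g_0)$, the right-hand side is uniformly bounded, and the compactness of the sublevel sets of $H$ (noted just before the equilibria computation \eqref{equilibri}) then gives a constant $M$, independent of $n$ and of $r\geq 0$, with $|u_n(r)|+|v_n(r)|\leq M$. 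In particular $(u_n,v_n)$ is globally defined.

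Next I would exploit these bounds to control the derivatives. On any fixed interval $[0,R]$, the right-hand sides of the differential equations for $u_n$ and $v_n$ are uniformly bounded (using $|u_n|/(r+\rho_n)\leq M/\rho_n$ for the shift term), so $(\dot{u}_n,\dot{v}_n)$ is uniformly bounded, hence $(u_n,v_n)$ is equicontinuous. By Arzelà-Ascoli, a subsequence converges uniformly on $[0,R]$ to some pair $(\tilde f,\tilde g)$. Writing the system in integral form,
\begin{equation*}
u_n(r) = u_n^0 - \int_0^r \frac{u_n(s)}{s+\rho_n}\,ds + \int_0^r \bigl[(u_n^2+v_n^2)v_n - (m-\omega)v_n\bigr]\,ds,
\end{equation*}
\begin{equation*}
v_n(r) = v_n^0 - \int_0^r \bigl[(u_n^2+v_n^2)u_n + (m+\omega)u_n\bigr]\,ds,
\end{equation*}
and observing that $\bigl|\int_0^r u_n(s)/(s+\rho_n)\,ds\bigr|\leq RM/\rho_n\to 0$, uniform convergence allows me to pass to the limit under the integrals. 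Hence $(\tilde f,\tilde g)$ solves the hamiltonian system \eqref{hamiltonian} with initial data $(f_0,g_0)$. By uniqueness for \eqref{hamiltonian}, $(\tilde f,\tilde g)=(f,g)$.

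Since every subsequence of $(u_n,v_n)$ admits a further subsequence converging to the same limit $(f,g)$, the full sequence converges uniformly on $[0,R]$, which is the desired conclusion. There is no substantial obstacle here: the only mild point is that the shifted system is not autonomous and does not admit $H$ as a conserved quantity, but the sign of $\dot{H}_n$ is still the right one to give confinement, and the singular term $u_n/(r+\rho_n)$ disappears in the limit precisely because the shift $\rho_n$ tends to infinity.
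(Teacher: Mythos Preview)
Your argument is correct. The paper does not actually prove this lemma; it simply states that ``the proof is the same as the one given in \cite{cv}'' (Cazenave--Vazquez). The argument there, and the one most commonly used for this type of stability lemma, is a direct Gronwall estimate on the difference $(u_n-f,\,v_n-g)$: once you have the uniform bound $|u_n|+|v_n|\leq M$ from the decreasing energy (your first step), the nonlinearity is uniformly Lipschitz on the relevant ball, and one obtains an inequality of the form
\[
\bigl|u_n(r)-f(r)\bigr|+\bigl|v_n(r)-g(r)\bigr|\;\lesssim\; |u_n^0-f_0|+|v_n^0-g_0|+\frac{MR}{\rho_n}+\int_0^r\bigl(|u_n-f|+|v_n-g|\bigr)\,ds,
\]
from which Gronwall gives the conclusion directly, with an explicit rate.

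Your compactness route (uniform bounds $\Rightarrow$ Arzel\`a--Ascoli $\Rightarrow$ pass to the limit in the integral equation $\Rightarrow$ uniqueness) is a perfectly valid alternative and is slightly softer, since it avoids writing out the Lipschitz estimate explicitly. The Gronwall approach is marginally more informative because it is quantitative, but for the purposes of this paper either argument suffices.
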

Since we know from (\ref{decreasing}) that the energy $H_{\lambda}$ decreases along the flow of (\ref{radial}) and that each solution is bounded, Lemma (\ref{stability}) allows us to conclude (see the proof of Lemma (\ref{decay})) that any solution must tend to an equilibrium of the hamiltonian flow (\ref{hamiltonian}). Thus a solution eventually entering the negative energy region $$ \{H(u,v)<0\}$$ will converge to $$ (0,\pm\sqrt{m-\omega})$$ spiraling toward that point. A proof of this property follows along the same lines of the analogous one given in \cite{estebannodari}.  This is illustrated by the following picture:

 \begin{figure}[!h]
        \centering
        \includegraphics[scale=.5]{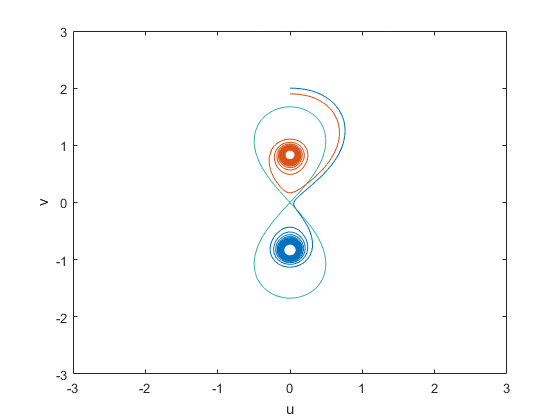}
 \caption{The energy level $\{H=0\}$ and two solutions entering the negative energy set $\{H<0\}$.}
 \label{figura}
   \end{figure}

If, on the contrary, there holds $$H_{\lambda}(r)>0, \qquad\forall r>0$$ then necessarily the solution tends to the origin, thus corresponding to a localized solution of our PDE.

%%%%%%%%%%%%%%%%%%%%%%%%%%%%%%%%%%%%%%%%%%%%%%%%%%%%%%%%%%%%%%%%%%%%%%%%%%%%%%%%%%%%%%%%%
%\subsection{The geometry of H}\label{geometry}
%%%%%%%%%%%%%%%%%%%%%%%%%%%%%%%%%%%%%%%%%%%%%%%%%%%%%%%%%%%%%%%%%%%%%%%%%%%%%%%%%%%%%%%%%

%%%%%%%%%%%%%%%%%%%%%%%%%%%%%%%%%%%%%%%%%%%%%%%%%%%%%%%%%%%%%%%%%%%%%%%%%%%%%%%%%%%%%%%%%
\subsection{The shooting method}
%%%%%%%%%%%%%%%%%%%%%%%%%%%%%%%%%%%%%%%%%%%%%%%%%%%%%%%%%%%%%%%%%%%%%%%%%%%%%%%%%%%%%%%%%
In our proof we use some ideas from \cite{letreustnodari}, \cite{troy}.

 \begin{definition}
 Put $I_{-1}=\emptyset$. For $k\in\mathbb{N}$ we define 
 
 \begin{equation}
 \label{sets}
 \begin{aligned}
   A_{k}&= \left\{\lambda>0 : \lim_{r\rightarrow+\infty}H_{\lambda}(r)<0, v_{\lambda}\:\mbox{changes sign $k$ times on}\: (0,+\infty)\right\}\\ 
   I_{k}&=  \left\{\lambda>0 : \lim_{r\rightarrow+\infty}(u_{\lambda}(r),v_{\lambda}(r))=(0,0), v_{\lambda}\: \mbox{changes sign $k$ times on} \:(0,+\infty)\right\}.
\end{aligned}
\end{equation}
 \end{definition}

It is immediate so see that $$A_{0}\neq\emptyset$$ as it includes the interval $\left(0,\sqrt{2(m-\omega}\right]$, since $$\{0\}\times\left(0,\sqrt{2(m-\omega)}\right]\subseteq\{(u,v)\in\R^{2} : H(u,v)\leq0\}$$ 

Moreover, numerical simulations indicates that the set $A_{0}$ is bounded and that $A_{1}$ is non-empty and unbounded. This implies that $I_{0}$ is non empty (see \ref{core}). Solutions tending to the origin are expected to appear in the shooting procedure when $\lambda$ passes from $A_{k}$ to $A_{k+1}$, as in (Figure \ref{figura}).

\begin{remark}
We found no numerical evidence for the existence of excited states. This may lead to conjecture that there are no nodal solutions, that is $I_{k}=\emptyset$ for $k>1$. 
The absence of excited states is compatible with the bubbling phenomenon (see the Introduction), which might prevent the existence of those solutions. However in 3D Lorentz-invariant models (\cite{els},\cite{exciteddirac}) it is known that they exist. 
\end{remark}
In this section we show that $I_{0}$ is non empty, thus proving (Theorem \ref{main}). 
This will be achieved in several intermediate steps.

We start with some preliminary lemmas, which are an adaptation of analogous results from \cite{letreustnodari}.
\begin{lemma}
\label{decay}
Let $(u_{\lambda},v_{\lambda})$ be a solution of (\ref{radial}) such that $v_{\lambda}$ changes sign a finite number of times and 
$$ \lim_{r\rightarrow +\infty}H_{\lambda}(r)\geq 0$$
then 
\begin{equation}
\vert u_{\lambda}(r) \vert+\vert v_{\lambda}(r)\vert\leq C e^{-\left(\frac{m-\omega}{2}\right)r}\quad,\quad\forall r\geq 0
\end{equation}
and thus
$$ \lim_{r\rightarrow+\infty}(u_{\lambda}(r),v_{\lambda}(r))=(0,0)$$
\end{lemma}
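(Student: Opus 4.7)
My plan is to prove this in two stages: first show that $(u_\lambda(r),v_\lambda(r))\to (0,0)$ as $r\to +\infty$, then upgrade this to the exponential bound.

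\emph{Convergence to the origin.} Since $v_\lambda$ changes sign only finitely many times, I may assume $v_\lambda\geq 0$ on $[r_0,+\infty)$; the other sign is symmetric. Combining the monotonicity (\ref{decreasing}) with the hypothesis $H_\infty:=\lim_{r\to\infty}H_\lambda(r)\geq 0$ gives $H_\lambda(r)\geq H_\infty\geq 0$ throughout. As the trajectory is bounded, its $\omega$-limit set $\Omega$ is nonempty, compact, connected, contained in $\{H=H_\infty\}\cap\{v\geq 0\}$, and invariant under the Hamiltonian flow (\ref{hamiltonian}) by Lemma \ref{stability}. If $H_\infty>0$ the level set $\{H=H_\infty\}$ is a single closed curve encircling all three equilibria (which lie in $\{H\leq 0\}$), and hence intersects $\{v<0\}$, contradicting $\Omega\subset\{v\geq 0\}$. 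Therefore $H_\infty=0$. On $\{H=0\}\cap\{v\geq 0\}$, the only flow-invariant subsets of (\ref{hamiltonian}) are $\{(0,0)\}$ and the closure $\gamma\cup\{(0,0)\}$ of the upper homoclinic loop based at the origin. I would rule out the homoclinic alternative using the dissipation identity
\begin{equation*}
\int_{0}^{\infty}\frac{u^{2}_{\lambda}(s)}{s}\bigl(u^{2}_{\lambda}+v^{2}_{\lambda}+m+\omega\bigr)\,ds=H_\lambda(0)-H_\infty<+\infty,
\end{equation*}
since a shadowing of $\gamma$ would keep $u_\lambda^2$ of order one over a window of bounded length on each loop, whose cumulative contribution is incompatible with this finite budget. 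Hence $\Omega=\{(0,0)\}$ and $(u_\lambda,v_\lambda)(r)\to(0,0)$.

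\emph{Exponential decay.} Once the trajectory is small for $r$ large, I would extract from $H_\lambda\geq 0$ the pointwise inequality
\begin{equation*}
\tfrac{m-\omega}{2}v_\lambda^{2}\leq\tfrac{m+\omega}{2}u_\lambda^{2}+\tfrac{1}{4}(u_\lambda^{2}+v_\lambda^{2})^{2},
\end{equation*}
which for $r$ large yields $|v_\lambda|\leq\sqrt{(m+\omega)/(m-\omega)}\,|u_\lambda|(1+o(1))$, showing that the trajectory asymptotically enters the stable subspace of the linearised saddle, on which $u_\lambda$ and $v_\lambda$ share their sign. Taking without loss of generality $u_\lambda,v_\lambda\geq 0$ eventually, the second equation of (\ref{radial}) then gives, for $r$ sufficiently large,
\begin{equation*}
\dot v_\lambda=-u_\lambda(u_\lambda^{2}+v_\lambda^{2}+m+\omega)\leq -(m+\omega)u_\lambda\leq -\sqrt{m^{2}-\omega^{2}}\,v_\lambda(1-o(1))\leq -\tfrac{m-\omega}{2}v_\lambda,
\end{equation*}
where the last inequality uses $\sqrt{m^{2}-\omega^{2}}>(m-\omega)/2$. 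Gronwall gives $v_\lambda(r)\leq Ce^{-(m-\omega)r/2}$. A parallel computation using the first equation of (\ref{radial}) (whose correction term $-u_\lambda/r$ has the right sign and is negligible for large $r$) gives the same bound for $u_\lambda$; combining these yields the stated estimate, and the asserted convergence to the origin is immediate.

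\emph{Main obstacle.} The subtlest step is the exclusion of the homoclinic alternative for $\Omega$: the dissipation $-\dot H_\lambda$ vanishes only like $1/r$, which is the borderline scale at which infinitely many shadowings of $\gamma$ might in principle consume only finite total energy. A careful quantification of the per-loop length (which grows only logarithmically in $r$, reflecting the increasing time spent near the saddle) is required to turn the divergence heuristic into a rigorous contradiction. A secondary technical point is fixing the relative sign of $u_\lambda$ and $v_\lambda$ asymptotically; this is what allows the nonlinear Gronwall step to close, and it rests on the observation that the asymptotic constraint $|v_\lambda|\lesssim|u_\lambda|$ forces the trajectory into one of the two branches of the stable eigendirection of the saddle.
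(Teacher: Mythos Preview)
Your $\omega$-limit set argument for the convergence step is a reasonable strategy, but as you yourself flag, the exclusion of the homoclinic alternative is not completed. This is a genuine gap: the dissipation rate $u_\lambda^2/r$ is exactly borderline, and turning the heuristic ``loops cost $\sim 1/r_n$ with $r_{n+1}-r_n\sim\log r_n$, hence the series diverges'' into a rigorous proof requires a careful shadowing lemma near the saddle together with uniform lower bounds on $u_\lambda^2$ along the outer arc. None of this is written down, and it is the heart of the matter in your scheme.

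The paper sidesteps this difficulty entirely by reversing the order of your two ``technical points''. Instead of first proving convergence and then fixing the relative sign of $u_\lambda$ and $v_\lambda$, it proves the sign condition $u_\lambda v_\lambda>0$ for all large $r$ \emph{first}, by an elementary argument: assuming $v_\lambda>0$ on $[R,\infty)$, one shows (i) $u_\lambda$ cannot stay negative forever (else $v_\lambda$ would be increasing to a limit, forcing convergence to $(0,\sqrt{m-\omega})$ and hence $H_\lambda\to H(0,\sqrt{m-\omega})<0$), and (ii) once $u_\lambda$ becomes positive it cannot return to zero, because at such a zero one would have $v_\lambda^2\leq m-\omega$ from the sign of $\dot u_\lambda$, placing the point in $\{H<0\}$. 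With $u_\lambda,v_\lambda>0$ eventually, the second equation of (\ref{radial}) makes $v_\lambda$ monotone decreasing, and convergence to an equilibrium follows from Lemma~\ref{stability}; the only admissible one is $(0,0)$. No homoclinic shadowing estimate is needed.

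For the exponential decay, your argument via $H_\lambda\geq 0\Rightarrow |v_\lambda|\lesssim |u_\lambda|$ is correct and close in spirit to the paper's. The paper packages it slightly differently: once $u_\lambda,v_\lambda>0$ and $u_\lambda^2+v_\lambda^2\leq\tfrac{m-\omega}{2}$, it reads off $\dot u_\lambda\leq -\tfrac{m-\omega}{2}v_\lambda$ and $\dot v_\lambda\leq -(m+\omega)u_\lambda$ directly from (\ref{radial}), sums, and applies Gronwall to $u_\lambda+v_\lambda$. The upshot is that what you call the ``secondary technical point'' (eventual sign agreement) is in fact the key lemma; proving it first makes both the convergence and the decay almost immediate.
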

\begin{proof}
We start by showing that under the above assumptions there exists $\overline{R}\in(0,+\infty)$ such that
\begin{equation}
u_{\lambda}(r)v_{\lambda}(r)>0 \quad,\quad\forall r\geq\overline{R}
\end{equation}
Since $v_{\lambda}$ changes sign a finite number of times, we may suppose w.l.o.g. that for some $ R>0$ 
$$  v_{\lambda}(r)>0\quad,\quad\forall r\geq R$$

We have to prove that $\exists R<\overline{R}<+\infty$ such that
$$ u_{\lambda}(r)>0\quad,\quad\forall r\geq \overline{R}$$ 
 Assume, by contradiction, that 
 $$u_{\lambda}(r)<0\quad,\quad\forall r>R $$
 Then the second equation of (\ref{radial}) implies that $ \dot{v}_{\lambda}(r)>0,\forall r>R$, and $v_{\lambda}$ is increasing for $r>R$. Thus 
$$ \lim_{r\rightarrow+\infty}v_{\lambda}(r)=\delta\in(0,+\infty]$$

Indeed, we cannot have $\delta=+\infty$ as in that case $$ \lim_{r\rightarrow+\infty}H_{\lambda}(r)=+\infty$$ contradicting the fact that $H_{\lambda}$ is decreasing along solutions of (\ref{radial}). 

Let $(\rho_{n})_{n}\subseteq\mathbb{R}$ be a sequence such that $$\lim_{n\rightarrow+\infty}\rho_{n}=+\infty\quad,\quad\lim_{n\rightarrow+\infty}u_{x}(\rho_{n})=\lambda$$
for some $\lambda\in\mathbb{R}$, and consider the solution $(U,V)$ of (\ref{hamiltonian}) such that $$(U(0),V(0))=(\lambda,\delta)$$
 
 By (\ref{stability}), it follows that $(u_{\lambda}(\rho_{n}+\ast),v_{\lambda}(\rho_{n}+\ast))$ converges uniformly to $(U,V)$ on bounded intervals. Since $$ \lim_{n\rightarrow+\infty}v_{\lambda}(\rho_{n}+r)=\delta\quad,\quad\forall r>0$$
 we have $V(r)=\delta$, for any $r\geq0$. The second equation of (\ref{hamiltonian}) implies that $U(r)=0$ for all $r>0$. 
 
 We conclude that $(U,V)$ is an equilibrium of the hamiltonian flow (\ref{hamiltonian}). Since $\delta>0$, $$(\lambda,\delta)=(0,\sqrt{m-\omega}) $$
 
 This is absurd, since we would have $$ 0\leq\lim_{r\rightarrow+\infty}H_{\lambda}(r)\leq H\left(0,\sqrt{m-\omega}\right)<0$$
 Thus there exists $\overline{R}\in(R,+\infty)$ such that $u_{\lambda}(\overline{R})=0$. Note that we have $$\dot{u}_{\lambda}(\overline{R})>0$$
 Indeed, $\dot{u}_{\lambda}(\overline{R})=v_{x}(\overline{R})\left[v^{2}_{\lambda}(\overline{R})-(m-\omega)\right]>0$
where the term in the r.h.s. is positive, otherwise the point $(0,v_{\lambda}(\overline{R}))$ would belong to the negative energy region, contradicting our assumptions on $H_{\lambda}(r)$.

Now suppose that there exists $R<\overline{R}<R'$ such that $u_{\lambda}(R')=0$ and ${u}_{\lambda}(r)>0$ on $(\overline{R},R')$. This implies that $\dot{u}_{\lambda}$ is negative in a left neighborhood of $R'$. By the first equation of (\ref{radial}), we get $$ v^{2}_{\lambda}(R')-(m-\omega)\leq0$$
Then $(0,v_{\lambda}(R'))\in\{H(u,v)<0\}$, and this is absurd as already remarked.

We thus conclude that 
\begin{equation}
\label{positive}
u_{\lambda}(r)>0\quad,\quad\forall r\geq\overline{R} 
\end{equation}

The second equation of (\ref{radial}) shows that $v_{\lambda}$ is decreasing on $(\overline{R},+\infty)$ and by (\ref{stability}), arguing as above, it can be proved that
\begin{equation}
\label{zero}
\lim_{r\rightarrow+\infty}(u_{\lambda}(r),v_{\lambda}(r))=(0,0)
\end{equation}
We now prove the exponential decay.

By (\ref{radial}), (\ref{zero}), (\ref{positive}), we have for all $r>\overline{R}$ 
 
\begin{equation}
\left\{\begin{aligned}
    \dot{u}_{\lambda} &\leq \frac{(m-\omega)}{2}v_{\lambda}-(m-\omega)v_{x}  \\ 
   \dot{v}&\leq -(m+\omega)u_{\lambda}
\end{aligned}\right.
\end{equation} 
 Then $$\frac{d}{dr}(u_{\lambda}+v_{\lambda})\leq-\frac{(m-\omega)}{2}(u_{\lambda}+v_{\lambda}) $$ for all $r>\overline{R}$. Then the claim follows, since $$u_{\lambda}(r),v_{\lambda}(r)>0,\quad \forall r\geq\overline{R}.$$
 \end{proof}

 \begin{lemma}
 \label{universal}
 There exists a constant $C_{0}>0$ such that, if for some $R>1$
 \begin{enumerate}
 \item $H_{\lambda}(R)<\frac{C_{0}}{R}$; 
  \item $u_{\lambda}(R)v_{\lambda}(R)>0$ and $v^{2}_{\lambda}(R)<2(m-\omega)$;
  \item $v_{\lambda}\:\mbox{changes sign $k$ times on}\:[0,R]$;
\end{enumerate}
then $\lambda \in A_{k} \cup I_{k} \cup A_{k+1}$.
 \end{lemma}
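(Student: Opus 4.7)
The plan is to choose a universal constant $C_0 > 0$ small enough that any orbit satisfying the three hypotheses at $R > 1$ lies so close to the separatrix $\{H \leq 0\}$ on $[R, +\infty)$ that it cannot perform more than one further sign change of $v_\lambda$ before being trapped in one of the two lobes of $\{H < 0\}$ (or converging to the origin). By the symmetry $(u, v) \mapsto (-u, -v)$ of (\ref{radial}), I may assume $u_\lambda(R), v_\lambda(R) > 0$. By (\ref{decreasing}), $H_\lambda(r) \leq H_\lambda(R) < C_0/R$ for all $r \geq R$.

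The set $\{H < 0\}$ has two disjoint open components $L_\pm$ around the equilibria $(0, \pm\sqrt{m-\omega})$, with $L_\pm \subset \{\pm v > 0\}$ and meeting only at the origin. If the orbit ever enters one of them, energy monotonicity keeps it trapped and, by the spiraling argument alluded to right after Lemma \ref{stability} (analogous to \cite{estebannodari}), it converges to the corresponding equilibrium without further sign change of $v_\lambda$. This accounts for $\lambda \in A_k$ (orbit enters $L_+$ without crossing $v = 0$ after $R$) and $\lambda \in A_{k+1}$ (orbit enters $L_-$ after exactly one sign change). When $H_\lambda \geq 0$ throughout and the total number of sign changes is finite, Lemma \ref{decay} supplies convergence to the origin.

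The remaining task is twofold. First, I would rule out one sign change followed by convergence to the origin (i.e.\ $\lambda \in I_{k+1}$, absent from the conclusion) by exploiting the saddle structure of (\ref{hamiltonian}) at the origin: its linearisation has eigenvalues $\pm\sqrt{m^2 - \omega^2}$, with the stable manifold tangent to the line $\{v = \sqrt{(m+\omega)/(m-\omega)}\, u\}$ lying in the first and third quadrants. An orbit entering the fourth quadrant just after a sign change sits on the unstable side of this direction; by Lemma \ref{stability} it is pushed away from the origin along the unstable eigendirection of the saddle, and must either enter $L_-$ or make a second sign change.

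The hard part is to rule out a second sign change on $(R, +\infty)$. Suppose $v_\lambda(r_1) = v_\lambda(r_2) = 0$ with $R < r_1 < r_2$ and $v_\lambda < 0$ on $(r_1, r_2)$. At each $r_j$ the identity $H(u_\lambda(r_j), 0) = \tfrac14 u_\lambda(r_j)^4 + \tfrac{m+\omega}{2} u_\lambda(r_j)^2 \leq H_\lambda(r_j) < C_0/R$ forces $|u_\lambda(r_j)|$ to be of size $O(\sqrt{C_0/R})$, and on $(r_1, r_2)$ the orbit essentially traces the lower homoclinic loop of the separatrix around $L_-$. Lemma \ref{stability} ensures the trajectory is uniformly close (for $R$ large) to a Hamiltonian orbit of energy at most $C_0/R$, and the classical logarithmic slowdown near the saddle yields $r_2 - r_1 \lesssim \log(R/C_0)$, so $r_2 \leq 2R$ past a universal threshold for $R$. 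On the portion of the loop away from the origin, $|u_\lambda|$ exceeds a constant $c_\star = c_\star(m, \omega) > 0$ on a set of length bounded below by some $\ell_\star > 0$. Plugging into (\ref{decreasing}) gives
\begin{equation*}
H_\lambda(r_1) - H_\lambda(r_2) \;\geq\; \int_{r_1}^{r_2} \frac{u_\lambda^2(s)}{s}(m+\omega)\, ds \;\geq\; \frac{(m+\omega)\, c_\star^2 \, \ell_\star}{r_2} \;\geq\; \frac{c_1}{R}
\end{equation*}
for some $c_1 = c_1(m,\omega) > 0$. Combined with $H_\lambda(r_2) \geq 0$ and $H_\lambda(r_1) < C_0/R$, the left-hand side is at most $C_0/R$, and choosing $C_0 < c_1$ yields the contradiction. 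The main obstacle is the quantitative control of the passage time near the saddle, which requires a careful uniform reduction to the linearised flow via Lemma \ref{stability}.
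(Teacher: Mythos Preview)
Your strategy differs substantially from the paper's, and the hard part you identify---controlling the location of the zeros $r_1,r_2$---is precisely where your argument has a genuine gap. You need $r_2\le CR$ so that the dissipation integral $\int_{r_1}^{r_2}\frac{u_\lambda^2}{s}(m+\omega)\,ds$ is bounded below by $c_1/R$. For this you invoke Lemma~\ref{stability} together with a ``logarithmic slowdown'' heuristic to get $r_2-r_1\lesssim\log(R/C_0)$. But Lemma~\ref{stability} is a \emph{limit} statement (uniform convergence on bounded intervals as $\rho_n\to\infty$), not a quantitative estimate valid on intervals of length growing like $\log R$; turning it into one would itself require a nontrivial saddle analysis. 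Moreover you give no bound on $r_1-R$: the orbit starts in the first quadrant with $H_\lambda(R)$ merely small, and its passage near the saddle before the first sign change could also be slow, so nothing in your argument prevents $r_1$ (hence $r_2$) from being much larger than $R$. Finally, your argument is phrased ``for $R$ large,'' but $C_0$ must be chosen \emph{before} $R$, and the conclusion must hold for every $R>1$.

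The paper sidesteps all of this with an elementary trick that you may find instructive: it computes
\[
\frac{1}{r}\frac{d}{dr}\bigl(r^2 H_\lambda(r)\bigr)=-\frac{u_\lambda^4}{2}+\frac{v_\lambda^2}{2}\bigl(v_\lambda^2-2(m-\omega)\bigr),
\]
which is negative as long as $v_\lambda^2<2(m-\omega)$. Assuming $\lambda\notin A_k\cup I_k\cup A_{k+1}$, one tracks the orbit until $u_\lambda$ first returns to $0$ (necessarily with $v_\lambda\le-\sqrt{2(m-\omega)}$), picks $R_1<R_2$ where $v_\lambda$ equals $-\tfrac12\sqrt{m-\omega}$ and $-\sqrt{m-\omega}$, bounds $R_2-R_1$ \emph{below} by a constant depending only on $m,\omega$ (from the second equation of (\ref{radial})), and integrates the identity to force $R_2^2H_\lambda(R_2)\le R^2\bigl(H_\lambda(R)-\frac{(m-\omega)^2}{4R(3m-\omega)}\bigr)\le0$ once $C_0\le\frac{(m-\omega)^2}{4(3m-\omega)}$. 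No control on the absolute location of $R_1,R_2$ is needed, and the case $\lambda\in I_{k+1}$ is covered automatically by the same contradiction (convergence to the origin after one sign change still forces $u_\lambda$ to vanish with $v_\lambda\le-\sqrt{2(m-\omega)}$). Your separate saddle-manifold argument for excluding $I_{k+1}$ is therefore unnecessary, and in any case is left at the level of a sketch.
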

 \begin{proof}
 Suppose, by contradiction, that $\lambda\notin A_{k} \cup I_{k} \cup A_{k+1}$. 
 
 W.l.o.g. we can assume that $u_{\lambda}(R)>0$ and $v_{\lambda}(R)>0$. Let
 $$\overline{R}:=\inf\{r>R\: :\: u_{\lambda}(r)\leq0\}\in(R,+\infty] $$ 
 Note that $v_{\lambda}$ changes sign exactly once in $(R,\overline{R})$. Indeed, as long as $u_{\lambda}>0$ the second equation of (\ref{radial}) shows that $v_{\lambda}$ is decreasing. Moreover we cannot have $v_{\lambda}(r)>0$ for all $(R,\overline{R})$, as in that case the solution would enter the negative energy zone or tend to the origin. This is impossible, since $\lambda\notin A_{k}\cup I_{k}$.
 
 Now suppose that $\overline{R}=+\infty$. We have seen that there exists $R<\tilde{R}<+\infty$ such that $v_{\lambda}<0$ on $(\tilde{R},\overline{R})$. Arguing as in the proof of (Lemma \ref{decay}), one easily sees that 
 $$\lim_{r\rightarrow+\infty}v_{\lambda}(r)=\delta\in(-\infty,0) $$
 Moreover, the solution tends to an equilibrium $(\lambda,\delta)$ of the hamiltonian system (\ref{hamiltonian}), as $r\longrightarrow+\infty$. 
 
 Thus $(\lambda,\delta)=(0,-\sqrt{m-\omega})$, giving a contradiction,as 
  $$0\leq\lim_{r\rightarrow+\infty}H_{\lambda}(r)=H\left(0,-\sqrt{(m-\omega)}\right)<0 $$ 
 Then $\overline{R}<+\infty$ and we have 
 $$u_{\lambda}(\overline{R})=0\qquad,\qquad v_{\lambda}(\overline{R})\leq-\sqrt{2(m-\omega)} $$
 since we must have $H_{\lambda}(\overline{R})>0$.
 
Let $R<R_{1}<R_{2}<\overline{R}$ be such that 

\begin{equation}
\label{decrease}
 v_{\lambda}(R_{1})=-\frac{\sqrt{m-\omega}}{2}\quad ,\quad v_{\lambda}(R_{2})=-\sqrt{m-\omega}
\end{equation}
 
Since $R>1$, we have $H_{\lambda}(R)<C_{0}$ and if $C_{0}$ is sufficiently small we have that 
\begin{equation}
\label{bounded}
u_{\lambda}(r)\leq\sqrt{m-\omega}\quad,\quad\forall r\in[R_{1},R_{2}] 
\end{equation}
We have, since $v_{\lambda}$ is decreasing and by (\ref{radial},\ref{bounded},\ref{decrease}) 
$$\frac{\sqrt{m-\omega}}{2}=v_{\lambda}(R_{1})-v_{\lambda}(R_{2})=-\int^{R_{2}}_{R_{1}}\dot{v}_{\lambda}(r)dr=\int^{R_{2}}_{R_{1}}\sqrt{m-\omega}\left(3m-\omega\right)dr $$
and then
\begin{equation}
\label{lower}
(R_{2}-R_{1})\geq\frac{1}{2(3m-\omega)}
\end{equation}
Moreover, a simple computation gives 
\begin{equation}
\label{trick}
\frac{1}{r}\frac{d}{dr}\left(r^{2}H_{\lambda}(r)\right)=2H_{\lambda}(r)+r\dot{H}_{\lambda}(r)=-\frac{u^{4}_{\lambda}(r)}{2}+\frac{v^{2}_{\lambda}(r)}{2}\left[v^{2}_{\lambda}(r)-2(m-\omega)\right]
\end{equation}
and then
\begin{equation}
\label{trick2}
\frac{d}{dr}\left(r^{2}H_{\lambda}(r)\right)<0\quad ,\quad \forall r\in[R,R_{2}]
\end{equation}
By (\ref{decrease},\ref{trick}) we have
\begin{equation}
\label{small}
\begin{split}
(R_{2})^{2}H_{\lambda}(R_{2})-(R_{1})^{2}H_{\lambda}(R_{1})&\leq-\int^{R_{2}}_{R_{1}}\frac{(m-\omega)^{2}}{2}rdr\\&=-\frac{(m-\omega)^{2}}{4}(R_{2}+R_{1})(R_{2}-R_{1})\\& \leq-\frac{(m-\omega)^{2}}{4(3m-\omega)}R
\end{split}
\end{equation}
Since the map $r\mapsto r^{2}H_{\lambda}(r)$ is decreasing on $[R,R_{2}]$ by (\ref{trick2}), then (\ref{small}) implies that
\begin{equation}
\begin{split}
(R_{2})^{2}H_{\lambda}(R_{2})&\leq(R_{1})^{2}H_{\lambda}(R_{1})-\frac{(m-\omega)^{2}}{4(3m-\omega)}R \\&\leq R^{2}\left(H_{\lambda}(R)-\frac{(m-\omega)^{2}}{4R(3m-\omega)}\right)\leq0
\end{split}
\end{equation}
if $C_{0}\leq \frac{(m-\omega)^{2}}{4(3m-\omega)}$. Then $$H_{\lambda}(R_{2})\leq 0 $$
reaching a contradiction, and the lemma is proved.
 \end{proof}
 
 The next lemma gives the main properties of the sets $A_{k}$ and $I_{k}$.
 
 \begin{lemma}
 \label{core}
 For all $k\in\mathbb{N}$ we have
 \begin{enumerate}
 \item $A_{k}$ is an open set;
 \item if $\lambda\in I_{k}$ then there exists $\varepsilon>0$ such that $(\lambda-\varepsilon,\lambda+\varepsilon)\subseteq A_{k}\cup I_{k}\cup A_{k+1}$;
 \item if $A_{k}\neq\emptyset$ and it is bounded, we have $\sup A_{k}\in I_{k}$;
 \item if $I_{k}\neq\emptyset$ and it is bounded, then $\sup I_{k}\in I_{k}$.
 \end{enumerate}
 \end{lemma}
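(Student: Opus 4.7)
\textbf{Plan for Lemma \ref{core}.} The strategy combines three tools: continuous dependence of $(u_\lambda,v_\lambda)$ on $\lambda$ from Lemma \ref{existence}, Lemma \ref{universal} used as a ``trapping'' criterion near the origin, and Lemma \ref{decay} used to upgrade $\lim H_\lambda\geq 0$ into $(u_\lambda,v_\lambda)\to(0,0)$. Two simple geometric observations underlie the whole argument. First, $v_\lambda$ cannot vanish in the open region $\{H<0\}$, since $H(u,0)=u^4/4+(m+\omega)u^2/2\geq 0$. Second, every zero of $v_\lambda$ at some $r>0$ is transverse: otherwise $v_\lambda(r)=\dot v_\lambda(r)=0$ would force $u_\lambda(r)=0$ via \eqref{radial}, and then by uniqueness the solution would be identically zero, contradicting $\lambda>0$.

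For item (1), I would fix $\lambda_0\in A_k$ and pick $R$ with $H_{\lambda_0}(R)<0$; then the $k$ transverse zeros of $v_{\lambda_0}$ all lie in $(0,R)$, and no new zero can appear for $r>R$ since $v_\lambda$ keeps a fixed sign inside $\{H<0\}$. Continuous dependence transfers these properties to $\lambda$ close to $\lambda_0$, giving $\lambda\in A_k$. For item (2), starting from $\lambda\in I_k$, Lemma \ref{decay} (together with the positivity $u_\lambda v_\lambda>0$ at infinity established in its proof) allows me to choose $R>1$ at which all three hypotheses of Lemma \ref{universal} hold and $v_\lambda$ has exactly $k$ transverse sign changes in $[0,R]$. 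Continuous dependence propagates these conditions to a small neighborhood of $\lambda$, and Lemma \ref{universal} concludes.

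For item (3), let $\lambda^{*}=\sup A_k$. Since by (1) the $A_j$ are open and pairwise disjoint, $\lambda^{*}\notin\bigcup_j A_j$: otherwise a full neighborhood of $\lambda^{*}$ would lie in some $A_j$, contradicting either the definition of supremum (if $j=k$) or the existence of a sequence $\lambda_n\nearrow\lambda^{*}$ in $A_k$ (if $j\neq k$). Hence $\lim H_{\lambda^{*}}\geq 0$; assuming $v_{\lambda^{*}}$ has finitely many sign changes (which should follow from dissipation of $H_\lambda$), Lemma \ref{decay} yields $\lambda^{*}\in I_{k'}$ for some $k'$. Continuity of the transverse crossings under $\lambda_n\to\lambda^*$ gives $k'\leq k$; conversely, applying Lemma \ref{universal} to $\lambda^{*}$ and transporting its hypotheses to $\lambda_n$ via continuous dependence forces $\lambda_n\in A_{k'}\cup I_{k'}\cup A_{k'+1}$, hence $k\in\{k',k'+1\}$, i.e.\ $k'\in\{k,k-1\}$. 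Item (4) then follows the same scheme, using (2) in place of (1) to exclude $\sup I_k$ from every $A_j$.

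The delicate point will be closing the final step of (3), namely ruling out $k'=k-1$. This case corresponds to one of the $k$ transverse zeros of $v_{\lambda_n}$ escaping to $+\infty$ as $\lambda_n\nearrow\lambda^{*}$, so that $v_{\lambda^{*}}$ has only $k-1$ sign changes. Exploiting that $\lambda^{*}$ is a \emph{supremum}, applying (2) to the putative $\lambda^{*}\in I_{k-1}$ and using openness of $A_{k-1}$ together with $\lambda^{*}\notin A_{k-1}$ pins down a whole right-neighborhood of $\lambda^{*}$ inside $I_{k-1}$, which is highly non-generic and should be excluded by a finer analysis of the flow near the equilibria $(0,\pm\sqrt{m-\omega})$---e.g.\ by tracking the location of the ``missing'' zero along $\lambda_n$ and comparing with the strict monotonicity of $H_{\lambda_n}$ provided by \eqref{decreasing}.
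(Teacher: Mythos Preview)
Your approach is essentially the same as the paper's: part (1) by continuous dependence, part (2) by combining the exponential decay of Lemma~\ref{decay} with Lemma~\ref{universal} and continuity, parts (3)--(4) by excluding $\lambda^{*}$ from every $A_j$ via openness, then applying (2) to pin down the index. The paper's write-up is in fact sketchier than yours on two points: it does not justify why $v_{\lambda^{*}}$ has only finitely many sign changes (so that $\lambda^{*}\in I_s$ for some $s$), and when it invokes (2) at $\lambda^{*}\in I_s$ to conclude ``$s=k$'', it simply asserts this---the argument as written only yields $k\in\{s,s+1\}$, i.e.\ $s\in\{k-1,k\}$, exactly the ambiguity you isolate as the ``delicate point''.

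So the gap you flag is real, but it is a gap shared with the paper rather than a defect of your route. Note, however, that for the only application made in the paper---namely $k=0$, to deduce $I_0\neq\emptyset$ from the boundedness of $A_0$---the issue evaporates: $s\in\{-1,0\}$ together with the convention $I_{-1}=\emptyset$ forces $s=0$. Your attempted resolution for general $k$ (forcing a whole right-neighbourhood of $\lambda^{*}$ into $I_{k-1}$ and appealing to non-genericity) is not quite complete as stated, since $\lambda^{*}\notin A_{k-1}$ does not by itself prevent $A_{k-1}$ from accumulating at $\lambda^{*}$ from the right; but this does not affect the main theorem.
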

 \begin{proof}
 \begin{enumerate}
 
 \item It follows from the continuity of the flow (\ref{radial}) w.r.t. the initial datum (Lemma \ref{existence});
 \item Let $\lambda\in I_{k}$. By Lemma (\ref{decay}) $$\vert u_{\lambda}(r)\vert+\vert v_{\lambda}(r)\vert\leq C\exp\left(-\frac{m-\omega}{2}r\right)\quad,\quad\forall r\geq0 $$ and then, given $C_{0}>0$ as in Lemma (\ref{trick}), $\exists R>1$ such that $H_{\lambda}(R)<\frac{C_{0}}{R}$, $u_{\lambda}(R)v_{\lambda}(R)>0$ and $v_{\lambda}$ changes sign $k$ times on $[0,R]$. 
 
 The continuity of the flow (\ref{radial}) implies that the same holds for an initial datum $y\in(\lambda-\varepsilon,\lambda+\varepsilon)$ for $\varepsilon>0$ small. The claim then follows by Lemma (\ref{trick}).
 \item Let $\lambda=\sup A_{k}$ and $(\lambda_{i})\subseteq A_{k}$ such that $\lim_{i\rightarrow+\infty}\lambda_{i}=\lambda$. 
 
 If we suppose that  $\lambda\in A_{r}$ for some $r\in\mathbb{N}$, then by continuity of the flow we also have $\lambda_{i}\in A_{r}$, for $i$ large. 
 This implies that $r=k$, that is, $ \lambda\in A_{k}$ which is absurd because $A_{k}$ is an open set, by point $(1)$. 
 
 Thus there holds $ \lambda\in I_{s}$, for some $s\in \mathbb{N}$, and by point $(2)$ there exists $\eps>0$ such that $$ \lambda\in A_{s}\cup I_{s}\cup A_{s+1}$$
 which implies that the same holds for $\lambda_{i}$, provided $i$ is large. Then, as before,  we have $s=k $.
 
 Moreover, as already remarked $$\lambda\notin \bigcup_{j\in\mathbb{N}}A_{j} $$
 and then the claim follows.
 \item Arguing as in the proof of point $(3)$ we get that $$ \sup I_{k}\in I_{r}$$
 for some $r\in\mathbb{N}$. Then we conclude as before, using point $(2)$.
 \end{enumerate}
 \end{proof}
 
 We want to prove that the set $A_{0}$ is bounded, showing that if $\lambda>0$ is large enough then there exists $R_{\lambda}>0$ such that $v_{\lambda}(R_{\lambda})=0$, as strongly suggested by numerical simulations (see Figure \ref{figura2}).
  \begin{figure}[!t]
  \label{figura2}
        \centering
        \includegraphics[scale=.4]{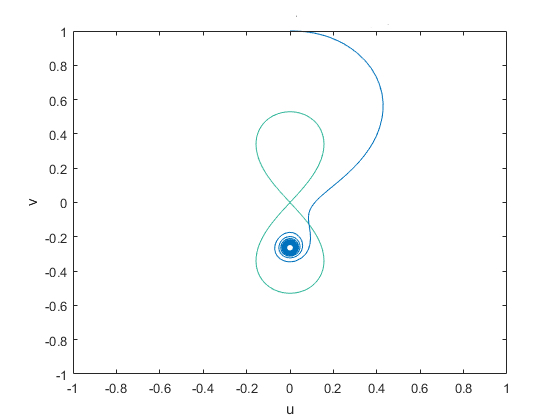}
 \caption{A solution entering the lower half-plane $\{v<0\}$.}
   \end{figure}

 To do so we relate solutions corresponding to such data to those of a limiting problem, inspired by \cite{troy}.
 
 %%%%%%%%%%%%%%%%%%%%%%%%%%%%%%%%%%%%%%%%%%%%%%%%%%%%%%%%%%%%%%%%%%%%%%%%%%%%%%%%%%%%%%%%
 \subsection{Asymptotic expansion}\label{asymptotic}
 %%%%%%%%%%%%%%%%%%%%%%%%%%%%%%%%%%%%%%%%%%%%%%%%%%%%%%%%%%%%%%%%%%%%%%%%%%%%%%%%%%%%%%%%
In this section we provide, after a suitable scaling, a precise asymptotic expansion that will allow us to control the behavior of the solution in term of the initial datum.

 Put $\eps=\lambda^{-1}$ and consider the following rescaling
 
 \begin{equation}\label{rescaling}
\left\{\begin{aligned}
    U_{\eps}(r) &= \eps u_{\lambda}(\eps^{2}r)  \\ 
   V_{\eps}(r)&=\eps v_{\lambda}(\eps^{2}r)
\end{aligned}\right.
\end{equation}

Using (\ref{radial}) we find the system for $(U_{\eps},V_{\eps})$:
 
 \begin{equation}\label{rescaled}
\left\{\begin{aligned}
    \dot{U}_{\eps}+\frac{U_{\eps}}{r} &= (U^{2}_{\eps}+V^{2}_{\eps})V_{\eps}-\eps^{2}(m-\omega)V_{\eps}  \\ 
   \dot{V}_{\eps}&= -(U^{2}_{\eps}+V^{2}_{\eps})U_{\eps}-\eps^{2}(m+\omega)U_{\eps}
\end{aligned}\right.
\end{equation}
together with the initial conditions $U_{\eps}(0)=0$, $V_{\eps}(0)=1$.

The limiting problem as $\eps\rightarrow 0$ (and thus $\lambda\rightarrow +\infty$) is 

\begin{equation}\label{limiting}
\left\{\begin{aligned}
    \dot{U}_{0}+\frac{U_{0}}{r} &= (U^{2}_{0}+V_{0}^{2})V_{0}  \\ 
   \dot{V}_{0}&= -(U^{2}_{0}+V^{2}_{0})U_{0}
\end{aligned}\right.
\end{equation}
with $U_{0}(0)=0$, $V_{0}(0)=1$.

As in \cite{Aubin} we consider the family of spinors given by
\begin{equation}\label{family} 
\varphi(y)=f(y)(1-y)\cdot\varphi_{0} \qquad y\in\mathbb{R}^{2}
\end{equation} 
where $\varphi_{0}\in \mathbb{C}^{2}$, $f(y)=\frac{2}{1+\vert y\vert^{2}}$ and the dot represents the Clifford product.

It can be easily checked that they are $\mathring{H}^{\frac{1}{2}}(\R^{2},\C^{2})$-solutions to the following Dirac equation
\begin{equation}\label{conformal}
\mathcal{D}\varphi=\vert\varphi\vert^{2}\varphi
\end{equation} 

\begin{remark}
The spin structure of euclidean spaces is quite explicit and the spinors given in (\ref{family}) can be rewritten in matrix notation, as 
$$\varphi(y)=f(y)(\mathds{1}_{2}+iy_{1}\sigma_{1}+iy_{2}\sigma_{2})\cdot\varphi_{0} \qquad y\in\mathbb{R}^{2}$$
$\mathds{1}_{2}$ and $\sigma_{i}$ being the identity and the Pauli matrices, respectively. 

See \cite{Jost} for more details.
\end{remark}
A straightforward (but tedious) computation shows that the spinors defined in (\ref{family}) are of the form of the ansatz (\ref{ansatz}), thus being solutions to the system (\ref{limiting}). Exploiting the conformal invariance of (\ref{conformal}) (see \cite{Isobecritical}) one can easily see that the solution matching the above initial conditions is
 
 \begin{equation}\label{massless}
\left(U_{0}(r)=\frac{2r}{4+r^{2}}, V_{0}(r)=\frac{4}{4+r^{2}}\right)
 \end{equation} 
 
 \begin{lemma}\label{uniform}
We have $$(U_{\eps},V_{\eps})\xrightarrow{\eps\rightarrow 0} (U_{0},V_{0})$$ uniformly on $[0,T]$, for all $T>0$, where $(U_{0},V_{0})$ is the solutions to the limiting problem (\ref{limiting}).
 \end{lemma}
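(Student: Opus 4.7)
The plan is to obtain the convergence from Gronwall's inequality applied to the integral formulations of systems \eqref{rescaled} and \eqref{limiting}, once a uniform a priori bound on $(U_\eps,V_\eps)$ is secured. Following the idea used in Lemma \ref{existence}, I would first rewrite the rescaled system as
\begin{align*}
U_\eps(r) &= \frac{1}{r}\int_0^r s\bigl[(U_\eps^2 + V_\eps^2) - \eps^2(m-\omega)\bigr]V_\eps(s)\,ds,\\
V_\eps(r) &= 1 - \int_0^r \bigl[(U_\eps^2 + V_\eps^2) + \eps^2(m+\omega)\bigr]U_\eps(s)\,ds,
\end{align*}
together with the analogous expressions (setting $\eps = 0$) for $(U_0,V_0)$, and compare the two.

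The central preliminary step, which I expect to be the main technical obstacle, is a uniform bound $U_\eps^2(r)+V_\eps^2(r)\leq M$ valid for all $r\geq 0$ and all $\eps\leq \eps_0$: the cubic nonlinearity is not globally Lipschitz, so without such a bound Gronwall cannot be invoked. The natural tool is the rescaled energy
\[
\tilde H_\eps(U,V) := \frac{(U^2+V^2)^2}{4} + \frac{\eps^2 m}{2}(U^2-V^2) + \frac{\eps^2 \omega}{2}(U^2+V^2),
\]
which by construction satisfies $\tilde H_\eps(U_\eps(r),V_\eps(r)) = \eps^{4}\, H_\lambda(\eps^2 r)$ and is therefore non-increasing along the flow thanks to \eqref{decreasing}. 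Since $\tilde H_\eps(0,1) = \tfrac14 - \eps^2\tfrac{m-\omega}{2}\leq \tfrac14$ and the quartic term dominates the $O(\eps^2)$ quadratic corrections, the sublevel set $\{\tilde H_\eps\leq 1/4\}$ is contained in a fixed compact subset of the plane for every $\eps\leq\eps_0$, which yields the desired uniform bound.

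With $M$ in hand, the elementary identity
\[
\bigl|(a^2+b^2)a-(a'^2+b'^2)a'\bigr|\leq C_M\bigl(|a-a'|+|b-b'|\bigr),
\]
applied to the difference of the integral equations, together with the crucial observation that the weight $s/r\leq 1$ neutralises the apparent $1/r$ singularity in the $U$-component, gives on every interval $[0,T]$
\[
\phi(r)\leq K_T\int_0^r \phi(s)\,ds + K'_T\,\eps^2,\qquad \phi(r):=|U_\eps(r)-U_0(r)|+|V_\eps(r)-V_0(r)|,
\]
with constants depending only on $T$, $M$, $m$, $\omega$. Gronwall's lemma then closes the argument and in fact delivers the quantitative rate $\phi(r)=O(\eps^2)$ uniformly on $[0,T]$, which is stronger than the uniform convergence claimed.
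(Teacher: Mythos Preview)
Your proof is correct and follows essentially the same route as the paper: rewrite both systems in integral form, use the rescaled Hamiltonian $\tilde H_\eps$ (decreasing along the flow, with $\tilde H_\eps(0,1)\leq 1$) to obtain a uniform-in-$\eps$ a priori bound on $(U_\eps,V_\eps)$, then exploit the resulting local Lipschitz estimate on the cubic nonlinearity together with Gronwall to conclude $|U_\eps-U_0|+|V_\eps-V_0|\lesssim\eps^2$ on $[0,T]$. Your exposition is in fact slightly more explicit about why the uniform bound is needed and how the weight $s/r\leq 1$ disposes of the apparent singularity.
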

 
 \begin{proof}
Fix $T>0$ and let $r\in[0,T]$.

Remark that the system (\ref{rescaled}) is equivalent to

\begin{equation}\label{integrallambda}
\left\{\begin{aligned}
   U_{\eps}(r) &= \frac{1}{r}\int^{r} _{0}sV_{\eps}(s)[U_{\eps}^{2}(s)+V_{\eps}^{2}(s)-\eps^{2}(m-\omega)]ds \\ 
   V_{\eps}(r)&=1 -\int^{r}_{0}U_{\eps}(s)[(U^{2}_{\eps}(s)+V^{2}_{\eps}(s))+\eps^{2}(m+\omega)]ds
\end{aligned}\right.
\end{equation}

Similarly, we can rewrite (\ref{limiting}) as 
\be\label{integrallimiting}
\left\{\begin{aligned}
   U_{0}(r) &= \frac{1}{r}\int^{r} _{0}sV_{0}(s)(U_{0}^{2}(s)+V_{\eps}^{2}(s))ds \\ 
   V_{0}(r)&=1 -\int^{r}_{0}U_{0}(s)(U^{2}_{0}(s)+V^{2}_{0}(s))ds
\end{aligned}\right.\ee

Arguing as for (\ref{radial}), for each fixed $\eps>0$ we associate a hamiltonian to the system (\ref{rescaled})
$$\tilde{H}_{\eps}(U,V):= \frac{(U^{2}+V^{2})^{2}}{4}+\eps^{2}\frac{m}{2}(U^{2}-V^{2})+\eps^{2}\frac{\omega}{2}(U^{2}+V^{2})$$ 
It's easy to see that $\tilde{H}_{\eps}$ is decreasing along the flow, so that $$\tilde{H}_{\eps}(U_{\eps}(r),V_{\eps}(r))\leq \tilde{H}_{\eps}(0,1)\leq 1\qquad\forall r\geq0.$$
The coercivity of $H_{\eps}$ then implies that 
\be\label{bounded}
\vert U_{\eps}(r)\vert +\vert V_{\eps}(r)\vert\leq C \qquad\forall r\geq0
\ee
for some $C>0$ independent of $\eps$.

By (\ref{integrallambda},\ref{integrallimiting}) and since $r\in[0,T]$ we get
\be
\begin{split}
\vert U_{\eps}(r)-U_{0}(r)\vert+&\vert V_{\eps}(r)-V_{0}(r)\vert\leq \int^{r}_{0}\left\vert V_{\eps}(V^{2}_{\eps}+U^{2}_{\eps})- V_{0}(V^{2}_{0}+U^{2}_{0})\right\vert ds\\ &+ \int^{r}_{0}\left\vert U_{\eps}(V^{2}_{\eps}+U^{2}_{\eps})-U_{0}(V^{2}_{0}+U^{2}_{0}) \right\vert ds+2\eps^{2}mT
\end{split}
\ee
It's not hard to see that the first two integrands in the r.h.s of the above inequality are locally Lipschitz. Then by (\ref{bounded}) we have
\be
\vert U_{\eps}(r)-U_{0}(r)\vert+\vert V_{\eps}(r)-V_{0}(r)\vert\lesssim \int^{r}_{0}\left(\vert U_{\eps}-U_{0}\vert+\vert V_{\eps}-V_{0}\vert\right) ds+2\eps^{2}mT
\ee

Since $r\in[0,T]$, the Gronwall lemma gives
\be
\vert U_{\eps}(r)-U_{0}(r)\vert+\vert V_{\eps}(r)-V_{0}(r)\vert\lesssim\eps^{2}
\ee
thus proving the claim.
 \end{proof}
 
 The above results is not enough to conclude that $V_{\eps}$ changes sign, since $V_{0}>0$ for all $r\geq0$. 
 
 We obtain a more refined analysis of the behavior of the solution thanks to a continuity argument.
 
 We consider the solution $(U_{\eps},V_{\eps})$ as a perturbation of $(U_{0},V_{0})$, as follows:
 \be\label{perturbation} 
 \left\{\begin{aligned}
    U_{\eps}(r)&=U_{0}(r)+\eps^{2}h_{1}(r)+\eps^{4}h_{2}(r,\eps)  \\ 
    V_{\eps}(r)&=V_{0}(r)+\eps^{2}k_{1}(r)+\eps^{4}k_{2}(r,\eps)
\end{aligned}\right.
 \ee
   
 and substituting into (\ref{rescaled}) we get the following linear system for $\eps^{2}$-order terms
 \be\label{one}
 \left\{\begin{aligned}
    \dot{h}_{1}+\frac{h_{1}}{r}&=-(m-\omega)V_{0}+2U_{0}V_{0}h_{1}+(U^{2}_{0}+3V^{2}_{0})k_{1}  \\ 
    \dot{k}_{1}&=-(m+\omega)U_{0}-2U_{0}V_{0}k_{1}-(3U^{2}_{0}+V^{2}_{0})h_{1}
\end{aligned}\right.
 \ee
 and we impose the initial conditions 
 \be
 h_{1}(0)=0\qquad,\quad k_{1}(0)=0
 \ee
 Rewriting (\ref{one}) in integral form, as in (\ref{integral}), we have:
  %\be
 %\left\{\begin{aligned}
    %{h}_{1}(r)&=-\frac{1}{r}\int^{r}_{0}s(m-\omega)V_{0}ds+\frac{1}{r}\int^{r}_{0}s\left[2U_{0}V_{0}h_{1}+(U^{2}_{0}+3V^{2}_{0})k_{1}\right]ds  \\ 
    %{k}_{1}(r)&=-\int^{r}_{0}(m+\omega)U_{0}ds-\int^{r}_{0}\left[2U_{0}V_{0}k_{1}+(3U^{2}_{0}+V^{2}_{0})h_{1}\right]ds
%\end{aligned}\right.
 %\ee
 %thus we have
 \be\label{estimatesone}
 \left\{\begin{aligned}
    \vert{h}_{1}(r)\vert&\leq\int^{r}_{0}(m-\omega)V_{0}ds+\int^{r}_{0}\left[2U_{0}V_{0}\vert h_{1}\vert+(U^{2}_{0}+3V^{2}_{0})\vert k_{1}\vert\right]ds  \\ 
    \vert{k}_{1}(r)\vert&\leq\int^{r}_{0}(m+\omega)U_{0}ds+\int^{r}_{0}\left[2U_{0}V_{0}\vert k_{1}\vert+(3U^{2}_{0}+V^{2}_{0})\vert h_{1}\vert\right]ds
\end{aligned}\right.
 \ee
 Remark that $$U_{0}V_{0}(r) + V^{2}_{0}(r)\leq U^{2}_{0}(r)\leq V_{0}\quad,\quad \forall r>2$$  and that
$$V_{0}\in L^{1}(\R^{+}) .$$
 
 Moreover, there holds
 \be\label{logaritmo} 
 U_{0}(r)=\frac{2r}{4+r^{2}}\sim \frac{2}{r}\qquad\mbox{as}\quad r\longrightarrow+\infty.
 \ee
 
Then summing up both sides of (\ref{estimatesone}) we get:
 \be
 \vert h_{1}(r)\vert+\vert k_{1}(r)\vert \lesssim \int^{r}_{0}U_{0}ds+\int^{r}_{0}\left(\vert h_{1}\vert+\vert k_{1}\vert\right)V_{0}ds
 \ee
 The Gronwall inequality thus gives:
 \be\label{gronwallone}
 \vert h_{1}(r)\vert+\vert k_{1}(r)\vert\lesssim\left( \int^{r}_{0}U_{0}ds\right)\exp\left(C\int^{r}_{0}V_{0}ds\right)\lesssim  \int^{r}_{0}U_{0}ds
 \ee
 where $C>0$ is a constant.

By (\ref{logaritmo}) we can thus conclude that 
 \be\label{logone}
 \vert h_{1}(r)\vert+\vert k_{1}(r)\vert\lesssim \ln(r)\qquad\mbox{as}\quad r\longrightarrow+\infty
 \ee
The above estimates imply that $$2U_{0}V_{0}k_{1}, (3U^{2}_{0}+V^{2}_{0})h_{1}\in L^{1}(\R^{+})$$ and then integrating the second equation in (\ref{one}) we get
\be\label{loggrowth}
h_{1}(r)\sim -\ln(r) \qquad\mbox{as}\quad r\longrightarrow+\infty
\ee
 We now have to deal with remainder terms in (\ref{perturbation}). 
 
 In particular, we want to analyze the behavior of those terms on the time interval $\left(0,\frac{1}{\eps}\right)$, thanks to a continuity argument based on the Gronwall inequality.
 
 Let 
 \be\label{sup}
 \overline{r}_{\eps}:=\sup\left\{ r\in\left[0,\eps^{-1}\right) : \vert h_{2}(r,\eps)\vert+\vert k_{2}(r,\eps)\vert < \eps^{-\frac{3}{2}} \right\}
 \ee
 Since $h_{2}(0,\eps)=k_{2}(0,\eps)=0$, by continuity it's evident that $$ \overline{r}_{\eps}>0$$
 As shown in the Appendix using the equations for $h_{2}$ and $k_{2}$ one gets the following estimates:
 \be\label{estimatessecond}
\vert h_{2}(r,\eps)\vert +\vert k_{2}(r,\eps)\vert \lesssim\frac{1}{\eps}\ln\left(\frac{1}{\eps}\right)+\int^{r}_{0}\left(V_{0}(s)+\eps^{2}\right)\left(\vert h_{2}(s,\eps)\vert+\vert k_{2}(s,\eps)\vert\right) ds  
 \ee
 for $0<r<\overline{r}_{\eps}\leq\frac{1}{\eps}$.
 
 The Gronwall estimates then gives:
 \be
 \vert h_{2}(r,\eps)\vert +\vert k_{2}(r,\eps)\vert \lesssim \frac{1}{\eps}\ln\left(\frac{1}{\eps}\right)\exp\left(C\int^{r}_{0}\left(\eps^{2}+V_{0}(s)\right)ds\right)
 \ee
 for some $C>0$.
 
Since $r<\frac{1}{\eps}$ and $V_{0}\in L^{1}(\R^{+})$ we eventually have:

\be\label{estimatestwo}
\vert h_{2}(r,\eps)\vert +\vert k_{2}(r,\eps)\vert \lesssim \frac{1}{\eps}\ln\left(\frac{1}{\eps}\right)
\ee

Now, if we suppose that 
$$ \overline{r}_{\eps} < \frac{1}{\eps}$$

by (\ref{estimatestwo}) and by continuity there exists $\delta>0$ such that 

$$\frac{1}{\eps}\ln\left(\frac{1}{\eps}\right) \lesssim \vert h_{2}(r,\eps)\vert +\vert k_{2}(r,\eps)\vert \leq\eps^{-\frac{3}{2}} $$ 
 
for all $r\in [\overline{r}_{\eps},\overline{r}_{\eps}+\delta)$, thus contradicting the definition in (\ref{sup}). 

Then there holds:
\be
\vert h_{2}(r,\eps)\vert+\vert k_{2}(r,\eps)\vert < \eps^{-\frac{3}{2}} ,\qquad \forall r\in\left(0,\frac{1}{\eps}\right)
\ee

Recall that the second equation in (\ref{perturbation}) reads as 
$$ V_{\eps}(r)=V_{0}(r)+\eps^{2}k_{1}(r)+\eps^{4}k_{2}(r,\eps)$$

By (\ref{estimatestwo}) and (\ref{massless}) we see that $$V_{0}=O(\eps^{2})\quad,\quad k_{2}=o(\eps^{2})\qquad \mbox{as}\quad r\rightarrow \left(\frac{1}{\eps}\right)^{-}$$

Then by (\ref{loggrowth}) we get

\be
V_{\eps}(r)\sim -\eps^{2}\ln{\frac{1}{\eps}}\qquad\mbox{as}\quad r\rightarrow \left(\frac{1}{\eps}\right)^{-}
\ee 
Thus we have
\be
V_{\eps}(R_{\eps})=0\qquad \mbox{for some}\quad  R_{\eps}\in \left(0,\frac{1}{\eps}\right)
\ee
 
 In view of the scaling (\ref{rescaling}), we conclude that for large initial data $\lambda>0$, the corresponding solution $(u_{\lambda},v_{\lambda})$ of (\ref{radial}) has at least one node. 
 
 This proves the following (recall the definition (\ref{core}))

\begin{lemma}
 The set $A_{0}$ is bounded.
 \end{lemma}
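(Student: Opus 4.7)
The plan is to show that for all sufficiently large $\lambda>0$, the solution $v_\lambda$ must change sign, so that $\lambda\notin A_0$. All the work has essentially been prepared in the asymptotic expansion subsection; what remains is to read off the conclusion via the scaling (\ref{rescaling}).

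First I would set $\eps=\lambda^{-1}$ and invoke the expansion
\[
V_\eps(r)=V_0(r)+\eps^{2}k_1(r)+\eps^{4}k_2(r,\eps),\qquad V_0(r)=\frac{4}{4+r^{2}},
\]
together with the logarithmic behaviour $|k_1(r)|\sim \ln r$ (with a definite sign, coming from integrating the linearised equation for $k_1$ as in (\ref{loggrowth})) and the bootstrap bound $|k_2(r,\eps)|\lesssim \eps^{-3/2}$ on $\bigl(0,1/\eps\bigr)$ established via the continuity argument with $\overline{r}_\eps$.

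Next I would evaluate $V_\eps$ at a scale $r\asymp 1/\eps$. There $V_0(r)=O(\eps^{2})$ and $\eps^{4}k_2(r,\eps)=o(\eps^{2})$, while the correction $\eps^{2}k_1(r)\asymp -\eps^{2}\ln(1/\eps)$ dominates with the opposite sign. Consequently
\[
V_\eps(r)\sim -\eps^{2}\ln\frac{1}{\eps}\qquad\text{as }r\to\bigl(1/\eps\bigr)^{-},
\]
which is strictly negative for $\eps$ small enough. Since $V_\eps(0)=1>0$, the intermediate value theorem produces $R_\eps\in (0,1/\eps)$ with $V_\eps(R_\eps)=0$.

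Undoing the rescaling, $v_\lambda(\eps^{2}R_\eps)=\eps^{-1}V_\eps(R_\eps)=0$, so $v_\lambda$ has at least one zero on $(0,+\infty)$ whenever $\lambda=1/\eps$ is large enough. By the definition of $A_0$, this forces $\lambda\notin A_0$ for all sufficiently large $\lambda$, hence $A_0$ is bounded.

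The only delicate point, already handled earlier, is the control of the quartic remainders $h_2,k_2$ on the long interval $(0,1/\eps)$: the naive Gronwall estimate would blow up exponentially in $1/\eps$, which is why one has to exploit the integrability $V_0\in L^{1}(\R^{+})$ and the continuity trick defining $\overline{r}_\eps$ to absorb the constant in the exponential. Granted this, the remainder of the argument is a direct comparison at the scale $r\simeq 1/\eps$.
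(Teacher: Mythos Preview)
Your proposal is correct and follows essentially the same approach as the paper: both arguments invoke the rescaling $(U_\eps,V_\eps)$, the expansion $V_\eps=V_0+\eps^2 k_1+\eps^4 k_2$, the logarithmic growth $k_1(r)\sim -c\ln r$ obtained by integrating the second equation of the linearised system, and the bootstrap bound on $k_2$ over $(0,1/\eps)$, to conclude that $V_\eps$ becomes negative near $r\sim 1/\eps$ and hence $v_\lambda$ has a node for large $\lambda$. Your write-up is in fact slightly more careful than the paper's in one respect: you correctly attribute the $-\ln r$ asymptotics to $k_1$ (the paper's display (\ref{loggrowth}) is stated for $h_1$, which appears to be a typo, since it is the $\dot k_1$ equation that is integrated and since the final conclusion for $V_\eps$ requires precisely $k_1\sim -\ln r$).
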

 
 Then by (Lemma \ref{core}) we have that $I_{0}\neq\emptyset$, that is the system (\ref{radial}) admits a solution without nodes, tending to $(0,0)$ as $r\rightarrow+\infty$, which correspond to a localised solution of equation (\ref{equation}). The exponential decay follows by (Lemma \ref{decay}). This proves (Theorem \ref{main}). 

%%%%%%%%%%%%%%%%%%%%%%%%%%%%%%%%%%%%%%%%%%
\section*{Appendix}
%%%%%%%%%%%%%%%%%%%%%%%%%%%%%%%%%%%%%%%%%%
In this section we prove the estimates (\ref{estimatessecond}) for remainder terms in (\ref{perturbation}).  

For the sake of brevity we only deal with $k_{2}$. The estimate for $h_{2}$ follows along the same lines with obvious modifications.

Inserting the ansatz (\ref{perturbation}) into the system (\ref{rescaled}), using equations (\ref{massless}) and (\ref{one}) and imposing the initial condition we get the following equation  

 \be\label{two}
 \left\{\begin{aligned}
    \frac{d}{dr}k_{2}(r,\eps)&=K_{0}(r)+\eps^{2}K_{2}(r)+\eps^{4}K_{4}(r)+\eps^{6}K_{6}(r)+\eps^{8}K_{8}(r) \\ 
    k_{2}(0,\eps)&=0
\end{aligned}\right.
 \ee
for all $\eps>0$. Note that the $K^{i}$s do not depend on $\eps$.

The terms in the r.h.s. are given by

\be\label{resto}
 \left\{\begin{aligned}
    K_{0}&=-\left(2U^{2}_{0}+V^{2}_{0}+2U_{0}V_{0}\right)h_{2}-U_{0}\left(3h^{2}_{1}+k^{2}_{1}\right)-2V_{0}h_{1}k_{1}-(m+\omega)h_{1} \\ 
    K_{2}&=-U_{0}\left(4h_{1}h_{2}+2k_{1}k_{2}\right)-(h^{3}_{1}+h_{1}k^{2}_{1})-2V_{0}(h_{1}k_{2}+k_{1}h_{2})-(m+\omega)h_{2}\\
    K_{4}&=-\left(U_{0}(2h^{2}_{2}+k^{2}_{2})+2V_{0}h_{2}k_{2}\right)-(2h_{1}k_{1}k_{2}+2h^{2}_{1}h_{2}+k^{2}_{1}h_{2})\\
    K_{6}&=-h_{1}h^{2}_{2}-k_{1}k^{2}_{2}-h_{1}h^{2}_{2}-k_{1}k_{2}h_{2}\\
    K_{8}&=-h^{3}_{2}-h_{2}k^{2}_{2}
\end{aligned}\right.
 \ee
Our aim is to estimate $\vert k_{2}(r,\eps)\vert$ for $0<r<\overline{r}$ (see (\ref{sup})) and $0<\eps\ll1$. 

This is achieved integrating (\ref{two}) and estimating the integral of the absolute value of each term in (\ref{resto}).

Remark that, by the definition of $(U_{0},V_{0})$, (\ref{massless})
\be\label{elleuno}
2U^{2}_{0}+V^{2}_{0}+2U_{0}V_{0}\leq V_{0}\in L^{1}(\R^{+})
\ee
Moreover, (\ref{logone}) and (\ref{logaritmo}) imply that $U_{0}\left(3h^{2}_{1}+k^{2}_{1}\right)\notin L^{1}(\R^{+})$ and then
\be
\int^{r}_{0}U_{0}\left\vert3h^{2}_{1}+k^{2}_{1}\right\vert ds\lesssim \int^{\frac{1}{\eps}}_{1}\frac{\ln(s)}{s}ds\lesssim \eps^{-\frac{1}{4}}
\ee
By the above remarks and (\ref{logone}), we have 
\be
V_{0}h_{1}k_{1}\in L^{1}(\R^{+})
\ee 
and 
\be
\int^{r}_{0}\vert h_{1}\vert ds = O(r\ln(r)),\qquad\mbox{as}\quad r\rightarrow+\infty
\ee

Collecting the above esimates we get
\be\label{zero}
\int^{r}_{0}\vert K_{0}\vert ds \lesssim \int^{r}_{0}V_{0}\vert h_{2}\vert ds+ \eps^{-1}\vert\ln(\eps)\vert 
\ee

The second term is estimated as follows.

Recall that 
\be\label{bound}
\vert h_{2}(r)\vert +\vert k_{2}(r)\vert \leq \eps^{-\frac{3}{2}}
\ee
for $0<r\leq \overline{r}$. Then by (\ref{elleuno}) we have

\be
\int^{r}_{0}U_{0}\left\vert4h_{1}h_{2}+2k_{1}k_{2}\right\vert ds\lesssim\eps^{-\frac{3}{2}} \int^{\frac{1}{\eps}}_{1}\frac{\ln(s)}{s} ds\lesssim\eps^{-\frac{7}{4}} 
\ee
Using again (\ref{logone}), it's not hard to see that 

\be
\int^{r}_{0}\vert h^{3}_{1}+h_{1}k^{2}_{1} \vert ds\lesssim \eps^{-\frac{5}{4}}
\ee

Since $$ V_{0}h_{1},V_{0}k_{1}\in L^{1}(\R^{+})$$ by (\ref{bound}) we have 
\be
\int^{r}_{0}V_{0}\left(\vert h_{1}k_{2}\vert+\vert k_{1}h_{2}\vert\right) ds\lesssim\eps^{-\frac{3}{2}} 
\ee
We then conclude that 
\be\label{due}
\int^{r}_{0}\vert K_{2}\vert ds \lesssim \eps^{-\frac{7}{4}} +\int^{r}_{0}\vert h_{2}\vert ds
\ee

Let's turn to the third term.

By (\ref{elleuno}) and (\ref{bound}) and since $U_{0}(r)=\frac{2r}{4+r^{2}}$, we get

\be
\int^{r}_{0}\vert U_{0}(2h^{2}_{2}+k^{2}_{2})+2V_{0}h_{2}k_{2}\vert ds\lesssim \eps^{-3}\int^{\frac{1}{\eps}}_{1}U_{0}ds\lesssim \eps^{-3}\vert\ln(\eps)\vert
\ee

Using (\ref{logone}) and (\ref{bound}) we can estimate
\be\label{quattro}
\int^{\overline{r}}_{0}\vert K_{4}\vert ds \lesssim   \eps^{-3}\vert\ln(\eps)\vert+\eps^{-\frac{5}{4}}\lesssim  \eps^{-3}\vert\ln(\eps)\vert
\ee

All the terms appearing in $K_{6}$ have the same behavior, so that by (\ref{logone}),(\ref{bound}) and above estimates it's easy to see that 

\be\label{sei}
\int^{r}_{0}\vert K_{6}\vert ds \lesssim \eps^{-4}\vert\ln(\eps)\vert
\ee

Lastly, by (\ref{bound}) we can estimate
\be\label{otto}
\int^{r}_{0}\vert K_{8}\vert ds \lesssim \eps^{-\frac{11}{2}}
\ee

Combining (\ref{zero},\ref{due},\ref{quattro},\ref{sei},\ref{otto}), integrating (\ref{resto}) gives 
\be
\vert k_{2}(r,\eps)\vert \lesssim \eps^{-1}\vert\ln(\eps)\vert + \int^{r}_{0}(V_{0}(s)+\eps^{2})\vert h_{2}(s,\eps)\vert ds
\ee

Analogous estimates can be worked out for $h_{2}$, obtaining
\be
\vert h_{2}(r,\eps)\vert \lesssim \eps^{-1}\vert\ln(\eps)\vert + \int^{r}_{0}(V_{0}(s)+\eps^{2})\vert k_{2}(s,\eps)\vert ds
\ee

and the claimed inequality (\ref{estimatessecond}) follows by summing up the last two estimates.
%%%%%%%%%%%%%%%%%%%%%%%%%%%%%%%%%%%%%%%%%%
\section*{References}

\bibliographystyle{elsarticle-num}

\bibliography{ShootingDirac}

\begin{thebibliography}{10}
\expandafter\ifx\csname url\endcsname\relax
  \def\url#1{\texttt{#1}}\fi
\expandafter\ifx\csname urlprefix\endcsname\relax\def\urlprefix{URL }\fi
\expandafter\ifx\csname href\endcsname\relax
  \def\href#1#2{#2} \def\path#1{#1}\fi

\bibitem{els}
M.~J. Esteban, M.~Lewin, E.~S{\'e}r{\'e}, Variational methods in relativistic
  quantum mechanics, Bull. Amer. Math. Soc. (N.S.) 45~(4) (2008) 535--593.
\newblock \href {http://dx.doi.org/10.1090/S0273-0979-08-01212-3}
  {\path{doi:10.1090/S0273-0979-08-01212-3}}.

\bibitem{graphene}
A.~Castro~Neto, F.~Guinea, N.~Peres, K.~Novoselov, A.~Geim, The electronic
  properties of graphene, Comptes Rendus Physique 14~(9-10) (2013) 760--778.

\bibitem{diracmaterials}
A.~B. T.O.~Wehling, A.M. Black-Schaffer, Dirac materials, Adv.Phys.. 63~(1)
  (2014) 1--76.

\bibitem{diracfermions}
J.~Fuchs, \href{arxiv.org/pdf/1306.0380v1}{Dirac materials}, Habilitation
  thesis: arxiv.org/pdf/1306.0380v1.
\newline\urlprefix\url{arxiv.org/pdf/1306.0380v1}

\bibitem{introdiracmaterials}
J.~Cayssol, Introduction to dirac materials and topological insulators, Rev.
  Mod. Phys. 109~(81).

\bibitem{FWhoneycomb}
C.~L. Fefferman, M.~I. Weinstein, Honeycomb lattice potentials and dirac
  points, J. Amer. Math. Soc. 25~(4) (2012) 1169--1220.
\newblock \href {http://dx.doi.org/10.1090/S0894-0347-2012-00745-0}
  {\path{doi:10.1090/S0894-0347-2012-00745-0}}.

\bibitem{FTWprotected}
C.~L. Fefferman, J.~P. Lee-Thorp, M.~I. Weinstein, Topologically protected
  states in one-dimensional continuous systems and {D}irac points, Proc. Natl.
  Acad. Sci. USA 111~(24) (2014) 8759--8763.
\newblock \href {http://dx.doi.org/10.1073/pnas.1407391111}
  {\path{doi:10.1073/pnas.1407391111}}.

\bibitem{binding}
M.~I.~W. C.~L.~Fefferman, J. P. Lee-Thorp,
  \href{arxiv.org/pdf/1610.04930}{Honeycomb schr\"{o}dinger operators in the
  strong binding regime}, To appear in Communications on Pure and Applied
  Mathematics.
\newline\urlprefix\url{arxiv.org/pdf/1610.04930}

\bibitem{boseeinstein}
L.~Pitaevskii, S.~Stringari, Bose-{E}instein condensation, Vol. 116 of
  International Series of Monographs on Physics, The Clarendon Press, Oxford
  University Press, Oxford, 2003.

\bibitem{nonlinearoptics}
J.~Moloney, A.~Newell, Nonlinear optics, Westview Press. Advanced Book Program,
  Boulder, CO, 2004.

\bibitem{diracthaller}
B.~Thaller, \href{http://dx.doi.org/10.1007/978-3-662-02753-0}{The {D}irac
  equation}, Texts and Monographs in Physics, Springer-Verlag, Berlin, 1992.
\newblock \href {http://dx.doi.org/10.1007/978-3-662-02753-0}
  {\path{doi:10.1007/978-3-662-02753-0}}.
\newline\urlprefix\url{http://dx.doi.org/10.1007/978-3-662-02753-0}

\bibitem{struwevariational}
M.~Struwe, Variational methods, 4th Edition, Vol.~34 of Ergebnisse der
  Mathematik und ihrer Grenzgebiete. 3. Folge. A Series of Modern Surveys in
  Mathematics [Results in Mathematics and Related Areas. 3rd Series. A Series
  of Modern Surveys in Mathematics], Springer-Verlag, Berlin, 2008,
  applications to nonlinear partial differential equations and Hamiltonian
  systems.

\bibitem{Isobecritical}
T.~Isobe, Nonlinear {D}irac equations with critical nonlinearities on compact
  {S}pin manifolds, J. Funct. Anal. 260~(1) (2011) 253--307.
\newblock \href {http://dx.doi.org/10.1016/j.jfa.2010.09.008}
  {\path{doi:10.1016/j.jfa.2010.09.008}}.

\bibitem{cv}
T.~Cazenave, L.~V\'azquez, Existence of localized solutions for a classical
  nonlinear {D}irac field, Comm. Math. Phys. 105~(1) (1986) 35--47.

\bibitem{exciteddirac}
M.~Balabane, T.~Cazenave, A.~Douady, F.~Merle,
  \href{https://projecteuclid-org.proxy.bu.dauphine.fr/euclid.cmp/1104162275}{Existence
  of excited states for a nonlinear {D}irac field}, Comm. Math. Phys. 119~(1)
  (1988) 153--176.
\newline\urlprefix\url{https://projecteuclid-org.proxy.bu.dauphine.fr/euclid.cmp/1104162275}

\bibitem{estebannodari}
M.~J. Esteban, S.~Rota~Nodari, Symmetric ground states for a stationary
  relativistic mean-field model for nucleons in the non-relativistic limit,
  Rev. Math. Phys. 24~(10) (2012) 1250025, 30.
\newblock \href {http://dx.doi.org/10.1142/S0129055X12500250}
  {\path{doi:10.1142/S0129055X12500250}}.

\bibitem{letreustnodari}
L.~c. Le~Treust, S.~Rota~Nodari, Symmetric excited states for a mean-field
  model for a nucleon, J. Differential Equations 255~(10) (2013) 3536--3563.
\newblock \href {http://dx.doi.org/10.1016/j.jde.2013.07.041}
  {\path{doi:10.1016/j.jde.2013.07.041}}.

\bibitem{es}
M.~J. Esteban, E.~S\'er\'e,
  \href{http://projecteuclid.org/euclid.cmp/1104273565}{Stationary states of
  the nonlinear {D}irac equation: a variational approach}, Comm. Math. Phys.
  171~(2) (1995) 323--350.
\newline\urlprefix\url{http://projecteuclid.org/euclid.cmp/1104273565}

\bibitem{dingwei}
Y.~Ding, J.~Wei, Stationary states of nonlinear {D}irac equations with general
  potentials, Rev. Math. Phys. 20~(8) (2008) 1007--1032.
\newblock \href {http://dx.doi.org/10.1142/S0129055X0800350X}
  {\path{doi:10.1142/S0129055X0800350X}}.

\bibitem{dingruf}
Y.~Ding, B.~Ruf, Existence and concentration of semiclassical solutions for
  {D}irac equations with critical nonlinearities, SIAM J. Math. Anal. 44~(6)
  (2012) 3755--3785.
\newblock \href {http://dx.doi.org/10.1137/110850670}
  {\path{doi:10.1137/110850670}}.

\bibitem{cuevas}
J.~Cuevas-Maraver, P.~G. Kevrekidis, A.~Saxena, A.~Comech, R.~Lan, Stability of
  solitary waves and vortices in a 2{D} nonlinear {D}irac model, Phys. Rev.
  Lett. 116~(21) (2016) 214101, 6.

\bibitem{troy}
K.~McLeod, W.~C. Troy, F.~B. Weissler, Radial solutions of {$\Delta u+f(u)=0$}
  with prescribed numbers of zeros, J. Differential Equations 83~(2) (1990)
  368--378.
\newblock \href {http://dx.doi.org/10.1016/0022-0396(90)90063-U}
  {\path{doi:10.1016/0022-0396(90)90063-U}}.

\bibitem{Aubin}
B.~Ammann, J.-F. Grosjean, E.~Humbert, B.~Morel, A spinorial analogue of
  {A}ubin's inequality, Math. Z. 260~(1) (2008) 127--151.
\newblock \href {http://dx.doi.org/10.1007/s00209-007-0266-5}
  {\path{doi:10.1007/s00209-007-0266-5}}.

\bibitem{Jost}
J.~Jost, Riemannian geometry and geometric analysis, sixth Edition,
  Universitext, Springer, Heidelberg, 2011.
\newblock \href {http://dx.doi.org/10.1007/978-3-642-21298-7}
  {\path{doi:10.1007/978-3-642-21298-7}}.

\end{thebibliography}
 
\end{document}